\def\be{\begin{equation}}
	\def\ee{\end{equation}}
\def\ba{\begin{array}}
	\def\ea{\end{array}}
\def\mathbi#1{\text{\em #1}}
\def\qed{\leavevmode\unskip\penalty9999 \hbox{}\nobreak\hfil
	\quad\hbox{\leavevmode  \hbox to.77778em{%
			\hfil\vrule   \vbox to.675em%
			{\hrule width.6em\vfil\hrule}\vrule\hfil}}
	\par\vskip3pt}
\newtheorem{theorem}{Theorem}
\newtheorem{lemma}{Lemma}
\newtheorem{observation}{Observation}
\newtheorem{proposition}{Proposition}
\begin{document}
	\title{\large\bf Distribution relationship of quantum battery capacity}
	
	\author{Yiding Wang, Xiaofen Huang and Tinggui Zhang$^{\dag}$}
	\affiliation{ School of Mathematics and Statistics, Hainan Normal University, Haikou, 571158, China \\
		$^{\dag}$ Correspondence to 050003@hainnu.edu.cn}
	
	\bigskip
	\bigskip
	
	\begin{abstract}
    We investigate the distribution relationship of quantum battery capacity. First, we prove that for two-qubit $X$-states, the sum of the subsystem battery capacities does not exceed the total system's battery capacity, and we provide the conditions under which they are equal. We then define the difference between the total system's and subsystems' battery capacities as the residual battery capacity ($RBC$) and show that this can be divided into coherent and incoherent components. Furthermore, we observe that this capacity monogamy relation for quantum batteries extends to general $n$-qubit $X$ states and any $n$-qubit $X$ state's battery capacity distribution can be optimized to achieve capacity gain through an appropriate global unitary evolution. Specifically, for general three -qubit $X$ states, we derive stronger distributive relations for battery capacity. Quantum batteries are believed to hold significant potential for outperforming classical counterparts in the future. Our findings contribute to the development and enhancement of quantum battery theory.
	\end{abstract}
	
	\pacs{04.70.Dy, 03.65.Ud, 04.62.+v} \maketitle
	
	\section{I. Introduction}
	With advancements in quantum thermodynamics, some novel quantum devices have gradually been proposed in recent years. Quantum batteries, as the devices can store and release energy in a proper manner \cite{phhs,cmov,akmc,mfha,omk}, have been widely investigated within the field of quantum science and technology.
	
	The quantum mechanical prototype of a battery was introduced by R. Alicki and M. Fannes \cite{ramf}, who also presented the concept of ergotropy, which is defined as the maximum amount of energy that can be extracted from a quantum system through unitary operations. Due to the unique quantum features utilized for energy storage and release, quantum batteries have the potential to outperform classical counterparts, offering superior work extraction \cite{akmc,jmmf,hlss,srsv,gfld}, enhanced charging power \cite{fcfa,yyzt,drgma,ssmp,jygd,gzyc,fclm,jygu,crdr,fmvc,tkkl}, and increased capacity \cite{lgcc,yyas,strs}. Consequently, quantum batteries have been extensively studied both theoretically and experimentally, inspiring a range of research efforts, including the development of charging models \cite{gfjf,dfmc,gmfp,gmam,jcrm,fmaj} and multipartite quantum batteries \cite{tplj,drgm,kxhj,szft,fqdy,dws}. Additionally, quantum batteries leverage quantum correlations, such as entanglement, to improve energy storage processes , addressing limitations of traditional batteries \cite{akmc,dfmc,rsmp}. Significant progress has been made in recent research on quantum batteries \cite{hyyh,fmyf,bapmp,mlsx,cadms}. Yang et al. introduced the invariant subspace method to effectively represent the quantum dynamics of the Tavis-Cummings battery \cite{hyyh}. The authors in \cite{fmyf} investigated the charging and self-discharging performance of quantum batteries and found that charging energy is positively correlated with coherence and entanglement while self-discharging energy is negatively correlated with coherence. Non-reciprocity, arising from the breaking of time-reversal symmetry, has become an important tool in various quantum technology applications. In \cite{bapmp}, Ahmadi et al. explored the potential of non-reciprocity in quantum batteries and demonstrated that non-reciprocity can improve charging efficiency and enhance the energy storage of batteries under certain optimal conditions. 
	
	The capacity or work extraction is believed to be a crucial quantitative indicator of the quantum battery quality \cite{yyas,strs}. The authors in \cite{yyas} introduced a new definition of quantum battery capacity, which can be directly linked with the battery state entropy and the coherence and entanglement measures. Wang et al. \cite{ykwlz} investigated the dynamics of quantum battery capacity for Bell-diagonal states under Markovian channels on the subsystem and observed that capacity increases for certain Bell-diagonal states under the amplitude damping channel. Ali et al. \cite{aasak} derived analytical expressions for the maximal extractable work, ergotropy, and the capacity of finite spin quantum batteries. In \cite{tgzh}, the authors showed that for bipartite systems, the battery capacity with respect to one subsystem can be improved through local-projective measurements on another subsystem.  Inspired by previous studies on the distribution of quantum correlations  \cite{vcjk,mkaw,tjof,thgaf,ycohf,ykbyf}, in this work, we address a fundamental and important question regarding battery capacity: to what extent the battery capacity of a subsystem limits the capacity of other subsystems. $X$ states, including the well-known Bell state, Werner state and Greenberger-Horne-Zeilinger (GHZ) state, are very important quantum states and has a wide range of applications such as quantum teleportation \cite{mccyl,zjxjh,jlmsk,lmjr}, quantum super dense coding \cite{mhmp1}, and quantum communication \cite{mhphrh}. The research on these states not only deepens our understanding of quantum correlations such as entanglement and nonlocality, but also promotes the development of quantum battery technology. We find that a monogamy relation exists in battery capacity for the general $n$-qubit $X$-states for the first time. Specifically, for the three qubits $X$-states, we establish stronger distributive relations for battery capacity.
	
	The remainder of this paper is organized as follows: In Section 2, we present the distribution relation of battery capacity for any two-qubit $X$ state [Theorem 1\,] and define the concept of residual battery capacity (RBC). In addition, we find that there can always be a unitary evolution to optimize the capacity distribution to achieve battery capacity gain [Theorem 2\,]. In Section 3, we introduce the monogamy relation of capacity for the general $n$-qubit $X$ state [Theorem 3\,] and achieve capacity gain by a unitary evolution [Theorem 4\,]. In particular, for any three-qubit $X$ state, we prove stronger monogamy relations in battery capacity [Theorem 5\,]. Finally, we summarize and discuss our conclusions in the last section.
	
    \section{II. The distribution relation of battery capacity for two-qubit X state}
		In \cite{yyas}, the authors introduced a novel definition of quantum battery capacity, which remains constant during any unitary evolution. This definition is given by
	\begin{equation}\label{e2}
		\begin{split}
			\mathcal{C}(\rho;H)&=\sum_{i=0}^{d-1}\epsilon_i(\lambda_i-\lambda_{d-1-i})\\
			&=\sum_{i=0}^{d-1}\lambda_i(\epsilon_i-\epsilon_{d-1-i}),
		\end{split}
	\end{equation}
	where $\{\lambda_i\}$ and $\{\epsilon_i\}$ represent the energy levels of $\rho$ and the Hamiltonian $H=\sum_{i=0}^{d-1}\epsilon_i|\varepsilon\rangle\langle\varepsilon|$, respectively, arranged in descending order without loss of generality, i.e., $\lambda_0\geqslant\lambda_1\geqslant...\geqslant\lambda_{d-1}$ and $\epsilon_0\geqslant\epsilon_1\geqslant...\geqslant\epsilon_{d-1}$. 
	
	It is worth noting that the authors demonstrated in \cite{yyas} that the battery capacity $\mathcal{C}$ is a Schur-convex functional for $\rho$.
	\begin{proposition}
		If a state $\rho$ is majorized by $\varrho$ $(\rho\prec\varrho)$, then we have $\mathcal{C}(\rho;H)\leq\mathcal{C}(\varrho;H)$.
	\end{proposition}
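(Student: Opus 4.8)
The plan is to reduce the statement to a property of the ordered eigenvalue vectors and then apply the Hardy--Littlewood--P\'olya characterization of majorization. First I would observe that, by the very formula \eqref{e2}, the quantity $\mathcal{C}(\rho;H)$ depends on $\rho$ only through its vector of eigenvalues $\lambda^{\downarrow}=(\lambda_0,\dots,\lambda_{d-1})$ written in non-increasing order, while $H$ enters only through its ordered spectrum $\epsilon_0\ge\cdots\ge\epsilon_{d-1}$. Using the second line of \eqref{e2}, write
\begin{equation}
\mathcal{C}(\rho;H)=\sum_{i=0}^{d-1}c_i\,\lambda_i,\qquad c_i:=\epsilon_i-\epsilon_{d-1-i}.
\end{equation}

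The key structural observation is that the coefficient vector $c=(c_0,\dots,c_{d-1})$ is itself non-increasing: since the spectrum of $H$ is ordered,
\begin{equation}
c_i-c_{i+1}=(\epsilon_i-\epsilon_{i+1})+(\epsilon_{d-2-i}-\epsilon_{d-1-i})\ge 0 .
\end{equation}
Hence $\mathcal{C}(\rho;H)=\langle c,\lambda^{\downarrow}\rangle$ is the inner product of two vectors both arranged in non-increasing order, which is exactly the template of a Schur-convex linear functional of $\lambda^{\downarrow}$.

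Next I would bring in the hypothesis $\rho\prec\varrho$. Writing $\mu^{\downarrow}=(\mu_0,\dots,\mu_{d-1})$ for the ordered eigenvalues of $\varrho$, majorization says $S_k:=\sum_{i=0}^{k}(\mu_i-\lambda_i)\ge 0$ for every $k\le d-2$, while $S_{d-1}=0$ because both states have unit trace. Abel summation (summation by parts) then gives
\begin{equation}
\mathcal{C}(\varrho;H)-\mathcal{C}(\rho;H)=\sum_{i=0}^{d-1}c_i(\mu_i-\lambda_i)=\sum_{i=0}^{d-2}(c_i-c_{i+1})\,S_i\ge 0,
\end{equation}
using $c_i-c_{i+1}\ge 0$ together with $S_i\ge 0$, which is the claim. (Equivalently, one may simply note that Schur-convexity of $\mathcal{C}$, recorded in \cite{yyas}, is by definition the assertion that $x\prec y$ implies $\mathcal{C}(x)\le\mathcal{C}(y)$; the computation above just makes that self-contained.)

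The step I expect to need the most care is not any single inequality but the bookkeeping: making explicit that $\mathcal{C}$ is a function of the \emph{ordered} eigenvalue vector alone, so that comparing $\rho$ and $\varrho$ via majorization is meaningful, and keeping the index reflection $i\mapsto d-1-i$ consistent so that the coefficients $c_i$ emerge monotone with the correct sign (and the boundary term $c_{d-1}S_{d-1}$ vanishes by trace preservation). Everything else is the standard argument that an order-matched weighted sum of an ordered vector is monotone under majorization.
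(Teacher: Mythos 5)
Your argument is correct. Note, however, that the paper does not actually prove Proposition~1: it simply records it as the Schur-convexity of the capacity established in Ref.~\cite{yyas}, and the proposition is invoked as a black box in the proofs of Theorems~1 and~3. Your proposal therefore supplies a self-contained derivation rather than following the paper's (nonexistent) proof. The route you take is the standard one: writing $\mathcal{C}(\rho;H)=\sum_i c_i\lambda_i$ with $c_i=\epsilon_i-\epsilon_{d-1-i}$, checking that the index reflection makes $c$ non-increasing, and then converting the majorization hypothesis into nonnegativity of the partial sums $S_k$ so that Abel summation yields $\mathcal{C}(\varrho;H)-\mathcal{C}(\rho;H)=\sum_{k=0}^{d-2}(c_k-c_{k+1})S_k\geq 0$, the boundary term vanishing by equality of traces. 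All steps check out: $c_i-c_{i+1}=(\epsilon_i-\epsilon_{i+1})+(\epsilon_{d-2-i}-\epsilon_{d-1-i})\geq 0$ is exactly what the ordering of the spectrum of $H$ gives, and the observation that $\mathcal{C}$ depends on $\rho$ only through its ordered spectrum is the right way to make the comparison under $\rho\prec\varrho$ meaningful. What your approach buys beyond the citation is transparency about when equality holds (it forces $(c_k-c_{k+1})S_k=0$ for every $k$), which is in the spirit of the equality analysis the paper later performs for its Theorem~1; alternatively, one could reach the same conclusion slightly faster by quoting the Hardy--Littlewood--P\'olya theorem already used in Lemma~1, since $x\prec y$ gives $x=Dy$ for a doubly stochastic $D$ and then the rearrangement inequality bounds $\langle c,(Dy)^{\downarrow}\rangle\leq\langle c,y^{\downarrow}\rangle$, but your summation-by-parts version is equally rigorous and more elementary.
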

	Inspired by the entanglement monogamy relation \cite{vcjk}, a natural and interesting idea is the monogamy relation of battery capacity. Let us first consider a simple scenario: the two qubits $X$ state.
	
	Given the two-qubit $X$ state $\rho_{AB}$ in the computational basis,
	$$
	\rho_{AB}=\left(\begin{array}{cccc}
		\rho_{11} & 0 & 0 & \rho_{14} \\
		0 & \rho_{22} & \rho_{23} & 0 \\
		0 & \rho_{32} & \rho_{33} & 0 \\
		\rho_{41} & 0 & 0 & \rho_{44}
	\end{array}\right),
	$$
	where $\sum_{i=1}^{4}\rho_{ii}=1$ and $\rho_{ii}\geq0$. The eigenvalues of $\rho_{AB}$ are
	\begin{equation*}
		\begin{split}
			&\lambda_0=\frac{1}{2}(\rho_{11}+\rho_{44}-\sqrt{(\rho_{11}-\rho_{44})^2+4|\rho_{14}|^2}),\\
			&\lambda_1=\frac{1}{2}(\rho_{11}+\rho_{44}+\sqrt{(\rho_{11}-\rho_{44})^2+4|\rho_{14}|^2}),\\
			&\lambda_2=\frac{1}{2}(\rho_{22}+\rho_{33}-\sqrt{(\rho_{22}-\rho_{33})^2+4|\rho_{23}|^2}),\\
			&\lambda_3=\frac{1}{2}(\rho_{22}+\rho_{33}+\sqrt{(\rho_{22}-\rho_{33})^2+4|\rho_{23}|^2}).
		\end{split}
	\end{equation*}
	Consider the following Hamiltonian of the whole system:
	\begin{small}
		\begin{equation}\label{e3}
			H_{AB}=\epsilon^A\sigma_3\otimes I_2+\epsilon^BI_2\otimes\sigma_3+\gamma\sigma_1\otimes\sigma_1,
		\end{equation}
	\end{small}
where $\sigma_i$ $(i=1,2,3)$ are the standard Pauli matrices, $\gamma\geq0$ is the interaction parameter, and $\epsilon^A\geqslant\epsilon^B\geqslant0$. $\gamma=0$ corresponds to the total interaction-free global Hamiltonian. The reduced density matrices of $\rho_{AB}$ with respect to subsystem $A$ and $B$ are
	$$
	\rho_{A}=\left(\begin{array}{cc}
		\rho_{11}+\rho_{22} & 0 \\
		0 & \rho_{33}+\rho_{44}
	\end{array}\right)
	$$
	and
	$$
	\rho_{B}=\left(\begin{array}{cc}
		\rho_{11}+\rho_{33} & 0 \\
		0 & \rho_{22}+\rho_{44}
	\end{array}\right),
	$$
	respectively. Therefore, one can compute $\mathcal{C}(\rho_{AB};H_{AB})$, $\mathcal{C}(\rho_{A};H_{A})$ and $\mathcal{C}(\rho_{B};H_{B})$ according to Eq. (\ref{e2}) and (\ref{e3}). Before presenting the main results, we first note the following facts.
	
	\begin{lemma}
	For an $n$-dimensional positive semidefinite matrix $P$, let its diagonal elements be denoted as $d_1, d_2..., d_n$, and $\lambda_1, \lambda_2,...,\lambda_n$ are denoted as its eigenvalues. Then we have 
	\begin{equation}\label{e4}
	(d_1, d_2,..., d_n)\prec(\lambda_1, \lambda_2,..., \lambda_n).
	\end{equation}
	\end{lemma}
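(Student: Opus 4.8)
The plan is to reduce \eqref{e4} to the classical fact that a doubly stochastic matrix maps any vector into one that majorizes it. First I would spectrally decompose $P=U\Lambda U^{\dagger}$ with $\Lambda=\mathrm{diag}(\lambda_1,\dots,\lambda_n)$ and $U$ unitary; positive semidefiniteness of $P$ guarantees $\lambda_i\geq 0$, although this plays no role in the majorization itself. Reading off the $j$-th diagonal entry of $U\Lambda U^{\dagger}$ gives
\begin{equation*}
d_j=\sum_{k=1}^{n}|U_{jk}|^{2}\,\lambda_k .
\end{equation*}

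Next I would observe that the real matrix $S=(S_{jk})$ defined by $S_{jk}=|U_{jk}|^{2}$ is doubly stochastic: unitarity of $U$ forces $\sum_{k}|U_{jk}|^{2}=1$ for every row $j$ and $\sum_{j}|U_{jk}|^{2}=1$ for every column $k$, and every entry is nonnegative. Hence $d=S\lambda$ with $S$ doubly stochastic. Invoking the Hardy--Littlewood--P\'olya theorem (equivalently, the Birkhoff--von Neumann theorem together with convexity of partial sums) then yields $S\lambda\prec\lambda$, that is, $(d_1,\dots,d_n)\prec(\lambda_1,\dots,\lambda_n)$, which is exactly \eqref{e4}. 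For a self-contained argument one may instead bound, after decreasing rearrangement, the sum of the $k$ largest diagonal entries by $\mathrm{tr}(\Pi P)$ for the rank-$k$ coordinate projection $\Pi$ onto the relevant axes, apply the Ky Fan maximum principle to get $\mathrm{tr}(\Pi P)\leq\lambda_1+\cdots+\lambda_k$, and note $\sum_i d_i=\mathrm{tr}\,P=\sum_i\lambda_i$ for the equality constraint at $k=n$.

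There is no serious obstacle here, as this is the ``easy half'' of the Schur--Horn theorem; the only point requiring care is the ordering convention, since majorization compares the decreasingly sorted vectors, so one must confirm that the representation $d=S\lambda$ with $S$ doubly stochastic genuinely controls every partial sum of the sorted $d_j$'s — which is precisely the content of the Hardy--Littlewood--P\'olya (or Ky Fan) step and not immediate from the convex-combination formula alone.
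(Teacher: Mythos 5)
Your proposal is correct and follows essentially the same route as the paper: spectral decomposition of $P$, the observation that $d=S\lambda$ with $S_{jk}=|U_{jk}|^2$ doubly stochastic by unitarity, and the Hardy--Littlewood--P\'olya theorem to conclude $(d_1,\dots,d_n)\prec(\lambda_1,\dots,\lambda_n)$. The alternative Ky Fan argument you sketch is a fine bonus but not needed; the paper's proof is exactly the doubly-stochastic one.
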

	\begin{proof}
	Since $P$ is a positive semidefinite matrix, it has spectral decomposition
	\begin{equation}\label{e5}
	P=U^\dagger diag(\lambda_1, \lambda_2,..., \lambda_n)U,
	\end{equation}
	where $U=(u_{ij})_{n\times n}$ is a unitary matrix. According to (\ref{e5}), it is not difficult to see that
	\begin{equation}\label{e6}
	d_i=\sum_{j=1}^{n}\lambda_j|u_{ji}|^2, 1\leq i\leq n.
	\end{equation}
	Setting $Q=(|u_{ij}|^2)_{n\times n}^T$, combined with the properties of unitary matrices, it can be concluded that $Q$ is a doubly stochastic matrix (A non-negative square matrix where the sum of elements in each row and column is equal to 1). And based on Eq. (\ref{e6}), we have
	\begin{equation}\label{e7}
	(d_1, d_2,..., d_n)^T=Q(\lambda_1, \lambda_2,..., \lambda_n)^T.
	\end{equation}
	Due to the Hardy-Littlewood-P\'olya theorem \cite{hlp}, one has
	\begin{equation*}
		(d_1, d_2,..., d_n)\prec(\lambda_1, \lambda_2,..., \lambda_n).
	\end{equation*}
	\end{proof}

    In fact, Lemma 1 naturally implies a lower bound for battery capacity,
    	\begin{equation}\label{e8}
    		\mathcal{C}(\rho;H)\geq\mathcal{C}(\tau;H)\equiv L_D,
    	\end{equation}
    	where $\tau$ is the decoherent state of $\rho$.
    It is worth noting that this lower bound holds for any battery state. Therefore, for some high-dimensional states with eigenvalues that are difficult to compute, this lower bound provides a relatively fast and simple calculation method, as it only requires consideration of the diagonal elements of the state.
    
    \begin{lemma}
    For a 2-qubit incoherent state $\rho_\text{ic}$, its diagonal elements are denoted as $\rho_{11},... \rho_{44}$, then we have
    \begin{equation}\label{e9}
    \mathcal{C}(\rho_\text{ic}^A;H_A)+\mathcal{C}(\rho_\text{ic}^B;H_B)\leq\mathcal{C}(\rho_\text{ic};H_{AB}).
    \end{equation}
    \end{lemma}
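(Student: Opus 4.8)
The plan is to evaluate all three capacities explicitly, reduce to the interaction-free case $\gamma=0$, and then settle the resulting scalar inequality by a single well-chosen splitting. First I would compute the spectra. Since $\rho_\text{ic}$ is diagonal, its eigenvalues are just $\rho_{11},\rho_{22},\rho_{33},\rho_{44}$; write $\lambda_{(1)}\ge\lambda_{(2)}\ge\lambda_{(3)}\ge\lambda_{(4)}$ for these rearranged in decreasing order. The Hamiltonian $H_{AB}$ of Eq.~(\ref{e3}) is block-diagonal in the pairs $\{|00\rangle,|11\rangle\}$ and $\{|01\rangle,|10\rangle\}$, with eigenvalues $\pm\alpha$ and $\pm\beta$, where $\alpha=\sqrt{(\epsilon^A+\epsilon^B)^2+\gamma^2}\ge\beta=\sqrt{(\epsilon^A-\epsilon^B)^2+\gamma^2}\ge0$ (using $\epsilon^A\ge\epsilon^B\ge0$). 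Because this spectrum is symmetric about $0$, the general formula Eq.~(\ref{e2}) collapses to $\mathcal{C}(\rho_\text{ic};H_{AB})=2\alpha(\lambda_{(1)}-\lambda_{(4)})+2\beta(\lambda_{(2)}-\lambda_{(3)})$. For the marginals, $H_A=\epsilon^A\sigma_3$ and $\rho_\text{ic}^A=\mathrm{diag}(\rho_{11}+\rho_{22},\rho_{33}+\rho_{44})$, so $\mathcal{C}(\rho_\text{ic}^A;H_A)=2\epsilon^A|s|$ with $s:=\rho_{11}+\rho_{22}-\rho_{33}-\rho_{44}$, and likewise $\mathcal{C}(\rho_\text{ic}^B;H_B)=2\epsilon^B|t|$ with $t:=\rho_{11}+\rho_{33}-\rho_{22}-\rho_{44}$.

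Next I would remove $\gamma$. Since $\alpha\ge\epsilon^A+\epsilon^B$ and $\beta\ge\epsilon^A-\epsilon^B\ge0$, while $\lambda_{(1)}-\lambda_{(4)}\ge0$ and $\lambda_{(2)}-\lambda_{(3)}\ge0$, it is enough to prove the $\gamma=0$ statement
\[
\epsilon^A|s|+\epsilon^B|t|\ \le\ (\epsilon^A+\epsilon^B)(\lambda_{(1)}-\lambda_{(4)})+(\epsilon^A-\epsilon^B)(\lambda_{(2)}-\lambda_{(3)}).
\]
To finish it I would invoke two elementary facts about the four reals $\rho_{11},\dots,\rho_{44}$: (i) among all $(+,+,-,-)$ signed sums of them the extreme values are $\pm(\lambda_{(1)}+\lambda_{(2)}-\lambda_{(3)}-\lambda_{(4)})$, so in particular $|s|\le\lambda_{(1)}+\lambda_{(2)}-\lambda_{(3)}-\lambda_{(4)}$; and (ii) $|s|+|t|=\max(|s+t|,|s-t|)=2\max(|\rho_{11}-\rho_{44}|,|\rho_{22}-\rho_{33}|)\le2(\lambda_{(1)}-\lambda_{(4)})$, since every $\rho_{ii}$ lies in $[\lambda_{(4)},\lambda_{(1)}]$. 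Using $\epsilon^A\ge\epsilon^B\ge0$ to split $\epsilon^A|s|+\epsilon^B|t|=\epsilon^B(|s|+|t|)+(\epsilon^A-\epsilon^B)|s|$ and then applying (ii) and (i) respectively bounds the left-hand side by $2\epsilon^B(\lambda_{(1)}-\lambda_{(4)})+(\epsilon^A-\epsilon^B)(\lambda_{(1)}+\lambda_{(2)}-\lambda_{(3)}-\lambda_{(4)})$, which a one-line rearrangement shows equals the right-hand side above; restoring the $\gamma$-factors then gives Eq.~(\ref{e9}).

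The main obstacle is precisely the choice of splitting in the last step. Bounding $|t|$ against the $(\epsilon^A-\epsilon^B)$-coefficient $\lambda_{(1)}-\lambda_{(2)}+\lambda_{(3)}-\lambda_{(4)}$ directly is false, so one cannot treat the two subsystems symmetrically; the decomposition $\epsilon^A|s|+\epsilon^B|t|=\epsilon^B(|s|+|t|)+(\epsilon^A-\epsilon^B)|s|$ is the one that works, and it relies on $\epsilon^A\ge\epsilon^B\ge0$ together with the identities $s+t=2(\rho_{11}-\rho_{44})$, $s-t=2(\rho_{22}-\rho_{33})$ that make fact (ii) usable. A minor item to verify carefully is the ordering of the four eigenvalues of $H_{AB}$ for all $\gamma\ge0$, which is what licenses the two-term form of $\mathcal{C}(\rho_\text{ic};H_{AB})$; since the chain of inequalities is tight, one also reads off that equality in Eq.~(\ref{e9}) forces $\gamma=0$ (plus equality in (i) and (ii)).
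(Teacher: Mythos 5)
Your proposal is correct, and its skeleton is the same as the paper's: compute the spectrum of $H_{AB}$ (two blocks, eigenvalues $\pm\sqrt{(\epsilon^A+\epsilon^B)^2+\gamma^2}$ and $\pm\sqrt{(\epsilon^A-\epsilon^B)^2+\gamma^2}$), drop $\gamma$ to lower-bound $\mathcal{C}(\rho_\text{ic};H_{AB})$ by $2(\epsilon^A+\epsilon^B)(\lambda_{(1)}-\lambda_{(4)})+2(\epsilon^A-\epsilon^B)(\lambda_{(2)}-\lambda_{(3)})$, and compare this with $2\epsilon^A|s|+2\epsilon^B|t|$. Where you add value is in the last comparison: the paper simply asserts the inequality $\epsilon^A(\lambda_{(1)}+\lambda_{(2)}-\lambda_{(3)}-\lambda_{(4)})+\epsilon^B(\lambda_{(1)}-\lambda_{(2)}+\lambda_{(3)}-\lambda_{(4)})\geq\epsilon^A|s|+\epsilon^B|t|$ with no justification, and, as you correctly point out, it is not termwise valid (e.g.\ $\rho_{11}=\rho_{33}$, $\rho_{22}=\rho_{44}$ gives $|t|>\lambda_{(1)}-\lambda_{(2)}+\lambda_{(3)}-\lambda_{(4)}$), so some joint argument using $\epsilon^A\geq\epsilon^B$ is genuinely required. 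Your splitting $\epsilon^A|s|+\epsilon^B|t|=\epsilon^B(|s|+|t|)+(\epsilon^A-\epsilon^B)|s|$, together with $|s|+|t|=2\max(|\rho_{11}-\rho_{44}|,|\rho_{22}-\rho_{33}|)\leq2(\lambda_{(1)}-\lambda_{(4)})$ and $|s|\leq\lambda_{(1)}+\lambda_{(2)}-\lambda_{(3)}-\lambda_{(4)}$, closes this gap cleanly; an equivalent route (closer to what the paper presumably had in mind) is to note that $|s|$ and $|t|$ are two of the three pairing sums, whose two largest values are $\lambda_{(1)}+\lambda_{(2)}-\lambda_{(3)}-\lambda_{(4)}\geq\lambda_{(1)}-\lambda_{(2)}+\lambda_{(3)}-\lambda_{(4)}$, and then apply a rearrangement argument with $\epsilon^A\geq\epsilon^B$. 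One small caveat: your closing remark that equality in Eq.~(\ref{e9}) forces $\gamma=0$ is overstated, since for instance the maximally mixed state gives $0=0$ for every $\gamma$; equality forces $\gamma=0$ only when the relevant eigenvalue gaps are nonzero, which is consistent with the paper's (also loosely stated) equality discussion and does not affect the lemma itself.
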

    \begin{proof}
    We set the descending order of $\rho_{11},...,\rho_{44}$ as $\alpha_1\geq\alpha_2\geq\alpha_3\geq\alpha_4$. In fact, it can be calculate that the eigenvalues of the Hamiltonian $H_{AB}$ given by Eq. (\ref{e3}) are $$\pm\sqrt{(\epsilon^A+\epsilon^B)^2+\gamma^2},\, \pm\sqrt{(\epsilon^A-\epsilon^B)^2+\gamma^2}.$$
    
    So one has
     \begin{equation*}
    	\begin{split}
    		\mathcal{C}(\rho_\text{ic};H_{AB})&\geq2(\alpha_1-\alpha_4)(\epsilon^A+\epsilon^B)+2(\alpha_2-\alpha_3)(\epsilon^A-\epsilon^B)\\
    		&=2(\alpha_1+\alpha_2-\alpha_3-\alpha_4)\epsilon^A\\
    		&+2(\alpha_1+\alpha_3-\alpha_2-\alpha_4)\epsilon^B\\
    		&\geq2|\rho_{11}+\rho_{22}-\rho_{33}-\rho_{44}|\epsilon^A\\
    		&+2|\rho_{11}+\rho_{33}-\rho_{22}-\rho_{44}|\epsilon^B\\
    		&=\mathcal{C}(\rho_\text{ic}^A;H_A)+\mathcal{C}(\rho_\text{ic}^B;H_B).
    	\end{split}
    \end{equation*}
    \end{proof}
    An interesting question is under what conditions we have
    \begin{equation}\label{e10}
    	\mathcal{C}(\rho_\text{ic}^A;H_A)+\mathcal{C}(\rho_\text{ic}^B;H_B)=\mathcal{C}(\rho_\text{ic};H_{AB}).
    \end{equation}
   According to the proof in Lemma 2, the interaction parameter $\gamma=0$ first, and then one can conclude that Eq. (\ref{e10}) is equivalent to:
    \begin{equation*}
    \begin{split}
    &(i)\,\alpha_1+\alpha_2-\alpha_3-\alpha_4=|\rho_{11}+\rho_{22}-\rho_{33}-\rho_{44}|,\\
    &(ii)\,\alpha_1+\alpha_3-\alpha_2-\alpha_4=|\rho_{11}+\rho_{33}-\rho_{22}-\rho_{44}|.
    \end{split}
    \end{equation*}
    It is not difficult to find 4 orders of diagonal elements that satisfy the above conditions:
    \begin{equation}\label{e11}
    \begin{split}
    &\rho_{11}\geq\rho_{22}\geq\rho_{33}\geq\rho_{44},\\
    &\rho_{22}\geq\rho_{11}\geq\rho_{44}\geq\rho_{33},\\
    &\rho_{33}\geq\rho_{44}\geq\rho_{11}\geq\rho_{22},\\
    &\rho_{44}\geq\rho_{33}\geq\rho_{22}\geq\rho_{11}.
    \end{split}
    \end{equation}
	We now present the main results of this section.
	\begin{theorem}
	For the two-qubit $X$-state, the following monogamy relation holds:
	\begin{equation}\label{e12}
	\mathcal{C}(\rho_A;H_A)+\mathcal{C}(\rho_B;H_B)\leq\mathcal{C}(\rho_{AB};H_{AB}).
	\end{equation}
	When $\rho_{AB}$ is an incoherent state satisfied its diagonal elements satisfy one of the conditions in (\ref{e11}), and $H_{AB}$ is the interaction-free global Hamiltonian, the equality holds.
	\end{theorem}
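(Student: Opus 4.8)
The plan is to reduce the claimed inequality \eqref{e12} to the incoherent case already handled in Lemma 2, using the Schur-convexity of $\mathcal{C}$ (Proposition 1) together with Lemma 1. First I would observe that because $\rho_{AB}$ is an $X$-state, its eigenvalues fall into two $2\times 2$ blocks, and a direct comparison (via Lemma 1 applied to each $2\times 2$ block, or equivalently the elementary fact that $a+d\pm\sqrt{(a-d)^2+4|c|^2}$ majorizes $\{a,d\}$ when $a+d$ is fixed) shows that the diagonal vector $(\rho_{11},\rho_{22},\rho_{33},\rho_{44})$ is majorized by the eigenvalue vector $(\lambda_0,\lambda_1,\lambda_2,\lambda_3)$. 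Hence the decohered state $\tau_{AB}$ satisfies $\tau_{AB}\prec\rho_{AB}$, so by Proposition 1, $\mathcal{C}(\tau_{AB};H_{AB})\le \mathcal{C}(\rho_{AB};H_{AB})$, which is precisely the lower bound \eqref{e8}.

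Next I would note that decohering $\rho_{AB}$ in the computational basis does not change the reduced states: $\tau_A=\rho_A$ and $\tau_B=\rho_B$, since $\rho_A,\rho_B$ are already diagonal (the partial traces only involve the diagonal entries $\rho_{11},\dots,\rho_{44}$). Therefore $\mathcal{C}(\rho_A;H_A)=\mathcal{C}(\tau_A;H_A)$ and $\mathcal{C}(\rho_B;H_B)=\mathcal{C}(\tau_B;H_B)$. Applying Lemma 2 to the incoherent state $\tau_{AB}$ gives
\begin{equation*}
\mathcal{C}(\rho_A;H_A)+\mathcal{C}(\rho_B;H_B)=\mathcal{C}(\tau_A;H_A)+\mathcal{C}(\tau_B;H_B)\le\mathcal{C}(\tau_{AB};H_{AB})\le\mathcal{C}(\rho_{AB};H_{AB}),
\end{equation*}
which is \eqref{e12}. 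For the equality claim, I would track where each inequality can be tight: the second inequality is an equality exactly when $\mathcal{C}(\tau_{AB};H_{AB})=\mathcal{C}(\rho_{AB};H_{AB})$, and the first is an equality exactly in the situation analyzed right after Lemma 2 — namely $\gamma=0$ and one of the four diagonal orderings in \eqref{e11}. So it remains to check that under those same hypotheses the chain collapses completely; when $\rho_{AB}$ is already incoherent we have $\tau_{AB}=\rho_{AB}$ and the second inequality is trivially an equality, leaving only the conditions from \eqref{e11}, which is what the theorem asserts.

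The main obstacle I anticipate is the equality analysis rather than the inequality itself: I must verify carefully that the conditions $(i)$ and $(ii)$ extracted after Lemma 2 are not merely sufficient but capture all equality cases arising from the estimate in Lemma 2's proof, i.e.\ that the first estimate $\mathcal{C}(\rho_\text{ic};H_{AB})\ge 2(\alpha_1-\alpha_4)(\epsilon^A+\epsilon^B)+2(\alpha_2-\alpha_3)(\epsilon^A-\epsilon^B)$ is saturated. That estimate uses the definition \eqref{e2} with the worst-case pairing of eigenvalues against the Hamiltonian spectrum $\pm\sqrt{(\epsilon^A\pm\epsilon^B)^2+\gamma^2}$; it is an equality precisely when $\gamma=0$ (so the Hamiltonian eigenvalues become $\pm(\epsilon^A+\epsilon^B),\pm(\epsilon^A-\epsilon^B)$ and the ordering matches) — which explains why the theorem restricts to the interaction-free Hamiltonian. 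A minor additional point to dispatch is that for $X$-states $\rho_A$ and $\rho_B$ are genuinely diagonal so no reordering subtlety enters the subsystem capacities; this is immediate from the displayed forms of $\rho_A$ and $\rho_B$.
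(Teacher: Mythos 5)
Your proposal is correct and follows essentially the same route as the paper's proof: compare $\rho_{AB}$ with its decohered state via Lemma 1 and Proposition 1, then apply Lemma 2 to the incoherent state while using that decoherence leaves the reduced states of an $X$-state unchanged, with the equality case reduced to $\gamma=0$ and the orderings in (\ref{e11}). Your only addition is making the step $\tau_A=\rho_A$, $\tau_B=\rho_B$ explicit, which the paper leaves implicit in the two-qubit case.
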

	\begin{proof}
	Consider the two -qubit $X$ state under the computational basis,
	$$
	\rho_{AB}=\left(\begin{array}{cccc}
		\rho_{11} & 0 & 0 & \rho_{14} \\
		0 & \rho_{22} & \rho_{23} & 0 \\
		0 & \rho_{32} & \rho_{33} & 0 \\
		\rho_{41} & 0 & 0 & \rho_{44}
	\end{array}\right),
	$$
	and its eigenvalues are denoted as $\lambda_i\,(i=0,...,3)$. According to Lemma 1, one can obtain that
	\begin{equation*}
		(\rho_{11}, \rho_{22}, \rho_{33}, \rho_{44})\prec(\lambda_0, \lambda_1, \lambda_2, \lambda_3).
	\end{equation*}
	Let $\rho_d=diag\,(\rho_{11}, \rho_{22}, \rho_{33}, \rho_{44})$ be the decoherent state of $\rho_{AB}$. Thus, we have
	\begin{small}
	\begin{equation*}
		\begin{split}
			\mathcal{C}(\rho_{AB};H_{AB})&\geq\mathcal{C}(\rho_d;H_{AB})\\
			&\geq\mathcal{C}(\rho_A;H_A)+\mathcal{C}(\rho_B;H_B).
		\end{split}
	\end{equation*}
	\end{small}
	The first inequality is due to Proposition 1, noting that $\rho_{11}, \rho_{22}, \rho_{33}$ and $\rho_{44}$ are eigenvalues of $\rho_d$. In particular, if $\rho_{AB}$ is an incoherent state such that its diagonal elements satisfy one of the conditions (\ref{e11}), and the interaction parameter $\gamma=0$, then the two inequalities above become two equalities.
	\end{proof}
	
	From Theorem 1, we can observe that for general two qubits $X$ coherent state $\rho_{AB}$,
	\begin{equation}\label{e13}
	\mathcal{C}(\rho_{AB};H_{AB})-\mathcal{C}(\rho_{A};H_{A})-\mathcal{C}(\rho_{B};H_{B})>0.
	\end{equation} 
	We can define this difference as the residual battery capacity ($RBC$) $\bigtriangleup\mathcal{C}$. In fact, we can divide $RBC$ into incoherent part $RBC_{ic}$ and a coherent part $RBC_c$, where $RBC_{ic}$ is the difference between the capacity of the decoherent state $\rho_d$ and the sum of the capacity of the reduced states, and $RBC_c$ is the capacity difference between the battery state $\rho$ and the decoherent state $\rho_d$. The relationship between them is shown in Figure 1. The reason is not difficult to understand. The reduced matrices of the $X$ state inevitably lose some incoherent information and all coherent information. In other words, the quantum correlation between subsystems $A$ and $B$ is ignored when using the reduced density matrix.
    \begin{figure}[htbp]
		\centering
		\includegraphics[width=0.5\textwidth]{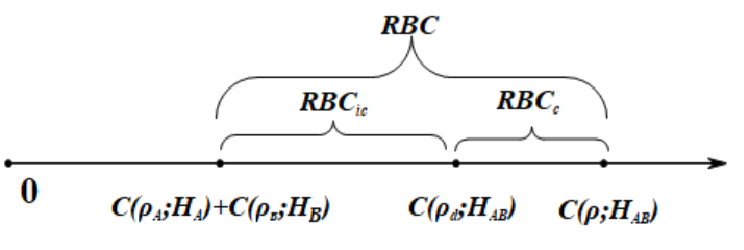}
		\vspace{-1em} \caption{The battery capacity distribution of two-qubit $X$ state.} \label{Fig.1}
	\end{figure}
	
	After obtaining the distributive relationship of battery capacity, a natural idea is to increase the capacity of subsystems without reducing the whole system capacity. Therefore, we can consider the unitary evolution of the battery state.
	
	Since unitary evolution does not change the eigenvalues of the state, the capacity of the entire battery system remains unchanged during this process according to Eq. (\ref{e2}). But we can use a special kind of unitary evolution to convert $RBC_{ic}$ into the battery capacity of the subsystem, so as to achieve the battery capacity gain, i.e.
	\begin{small}
	\begin{equation*}
		\mathcal{C}(\Tilde{\rho}_A;H_A)+\mathcal{C}(\Tilde{\rho}_B;H_B)-\mathcal{C}(\rho_A;H_A)-\mathcal{C}(\rho_B;H_B)>0,
	\end{equation*}
	\end{small}
	where $\Tilde{\rho}_A$ and $\Tilde{\rho}_B$ are reduced states of $\Tilde{\rho}=U\rho\,U^\dagger$.
	\begin{theorem}
	For an arbitrary two qubit $X$ state $\rho_{AB}$, there always exists a unitary evolution $U$ such that the subsystems can achieve battery capacity gain.
	\end{theorem}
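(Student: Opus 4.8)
The plan is to produce the unitary explicitly and then verify the inequality it yields. Since a unitary leaves the spectrum of $\rho_{AB}$ unchanged and $\mathcal{C}(\cdot\,;H_{AB})$ depends on the state only through its eigenvalues (Eq. (\ref{e2})), the capacity of the whole battery is conserved automatically; the entire task is to make the reduced states of $\tilde\rho=U\rho_{AB}U^\dagger$ carry at least as much — generically strictly more — capacity than those of $\rho_{AB}$.

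I would take $U$ to be the unitary that diagonalizes $\rho_{AB}$ in the computational basis with its eigenvalues placed in \emph{descending} order, so that $\tilde\rho=\mathrm{diag}(\lambda_0,\lambda_1,\lambda_2,\lambda_3)$ with $\lambda_0\geq\lambda_1\geq\lambda_2\geq\lambda_3$. Then $\tilde\rho_A=\mathrm{diag}(\lambda_0+\lambda_1,\lambda_2+\lambda_3)$, $\tilde\rho_B=\mathrm{diag}(\lambda_0+\lambda_2,\lambda_1+\lambda_3)$, and Eq. (\ref{e2}) gives
\[
\mathcal{C}(\tilde\rho_A;H_A)+\mathcal{C}(\tilde\rho_B;H_B)=2\epsilon^A(\lambda_0+\lambda_1-\lambda_2-\lambda_3)+2\epsilon^B(\lambda_0+\lambda_2-\lambda_1-\lambda_3).
\]
Note that $\tilde\rho$ is incoherent and its diagonal ordering is exactly the first case of (\ref{e11}); hence when $\gamma=0$ this quantity also equals $\mathcal{C}(\tilde\rho;H_{AB})=\mathcal{C}(\rho_{AB};H_{AB})$ by Theorem 1, i.e. $U$ reroutes the entire incoherent residual $RBC_{ic}$ of $\rho_{AB}$ into the marginals.

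Next I would bound the original marginal capacities. Setting $u=\rho_{11}-\rho_{44}$ and $v=\rho_{22}-\rho_{33}$, one has $\mathcal{C}(\rho_A;H_A)=2\epsilon^A|u+v|$ and $\mathcal{C}(\rho_B;H_B)=2\epsilon^B|u-v|$. By Lemma 1 the diagonal of $\rho_{AB}$ is majorized by $(\lambda_0,\dots,\lambda_3)$, which gives $|u|\leq\lambda_0-\lambda_3$, $|v|\leq\lambda_0-\lambda_3$ and $|u+v|\leq\lambda_0+\lambda_1-\lambda_2-\lambda_3$. Using $|u+v|+|u-v|=2\max(|u|,|v|)\leq 2(\lambda_0-\lambda_3)$ together with $\epsilon^A\geq\epsilon^B\geq 0$,
\[
\epsilon^A|u+v|+\epsilon^B|u-v|=\epsilon^B\big(|u+v|+|u-v|\big)+(\epsilon^A-\epsilon^B)|u+v|\leq 2\epsilon^B(\lambda_0-\lambda_3)+(\epsilon^A-\epsilon^B)(\lambda_0+\lambda_1-\lambda_2-\lambda_3),
\]
and the identity $2(\lambda_0-\lambda_3)-(\lambda_0+\lambda_1-\lambda_2-\lambda_3)=\lambda_0+\lambda_2-\lambda_1-\lambda_3$ shows the right-hand side equals $\epsilon^A(\lambda_0+\lambda_1-\lambda_2-\lambda_3)+\epsilon^B(\lambda_0+\lambda_2-\lambda_1-\lambda_3)$. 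Hence $\mathcal{C}(\rho_A;H_A)+\mathcal{C}(\rho_B;H_B)\leq\mathcal{C}(\tilde\rho_A;H_A)+\mathcal{C}(\tilde\rho_B;H_B)$, the claimed capacity gain; tracking the equality cases (they force $\max(|\rho_{11}-\rho_{44}|,|\rho_{22}-\rho_{33}|)=\lambda_0-\lambda_3$ together with $\epsilon^A=\epsilon^B$ or $|u+v|$ already maximal) shows the gain is strict unless $\rho_{AB}$ already saturates this bound — in particular for every genuinely coherent $X$ state with non-degenerate spectrum.

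The step I expect to be the crux is this last inequality: the crude bounds $|u\pm v|\leq\lambda_0+\lambda_1-\lambda_2-\lambda_3$ alone are too lossy (they reproduce only the coefficient $\epsilon^A+\epsilon^B$ in front of the larger gap), so one genuinely needs the sharper estimate $|u+v|+|u-v|\leq 2(\lambda_0-\lambda_3)$ coming from majorization of the diagonal, combined with the exact cancellation above so that the slack routed through the smaller coupling $\epsilon^B$ matches precisely the deficit $\lambda_0+\lambda_2-\lambda_1-\lambda_3<\lambda_0+\lambda_1-\lambda_2-\lambda_3$. It is also worth recording (again via Lemma 1, which controls the diagonal of any $X$-form $\tilde\rho$) that the chosen $U$ is optimal among spectrum- and $X$-structure-preserving unitaries: since $\epsilon^A\geq\epsilon^B$, the sum $2\epsilon^A|d_1+d_2-d_3-d_4|+2\epsilon^B|d_1+d_3-d_2-d_4|$ over admissible diagonals $(d_1,\dots,d_4)$ is maximized when the full gap $\lambda_0+\lambda_1-\lambda_2-\lambda_3$ sits on the $\epsilon^A$ term, which is exactly the descending arrangement.
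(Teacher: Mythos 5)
Your argument is correct, but it takes a genuinely different route from the paper. The paper fixes a diagonal ordering (WLOG $\rho_{11}\geq\rho_{22}\geq\rho_{44}>\rho_{33}$) and uses the fixed permutation swapping $|10\rangle\leftrightarrow|11\rangle$: subsystem $A$ then gains because its reduced state keeps the same diagonal but acquires the off-diagonal entry $\rho_{14}+\rho_{23}$ (Lemma 1 plus Proposition 1), while subsystem $B$ gains by a direct comparison of rearranged diagonals; this transfers part of $RBC_c$ to $A$ and the incoherent residual to $B$. You instead take the state-dependent unitary that diagonalizes $\rho_{AB}$ with sorted spectrum, and prove the marginal-sum increase analytically via majorization-derived bounds $|u|,|v|\leq\lambda_0-\lambda_3$, $|u+v|\leq\lambda_0+\lambda_1-\lambda_2-\lambda_3$ and the splitting $\epsilon^A|u+v|+\epsilon^B|u-v|=\epsilon^B(|u+v|+|u-v|)+(\epsilon^A-\epsilon^B)|u+v|$; the algebra checks out. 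What your route buys: the comparison involves only marginal capacities, so it is uniform in $\gamma$ and needs no case analysis on the diagonal ordering, and at $\gamma=0$ your $U$ saturates Theorem 1, i.e.\ it routes the \emph{entire} residual $\bigtriangleup\mathcal{C}=RBC_{ic}+RBC_c$ (not just $RBC_{ic}$, as you write at one point) into the subsystems — a strictly stronger conclusion than the paper's, which only transfers part of it. What the paper's route buys: the unitary is a fixed, simple permutation independent of the state, and the proof is shorter. One shared caveat: the theorem asks for a \emph{strict} gain, and your chain (like the paper's) only yields $\geq$ in general — indeed no unitary can give strict gain for, e.g., $\rho_{AB}=I_4/4$ or for incoherent states already satisfying Eq.~(\ref{e11}) — so your closing equality-case discussion (which omits the $\epsilon^B=0$ branch) is at the same level of rigor as the paper's implicit generic-ordering assumption; making the theorem literally true requires excluding such already-optimal states in either treatment.
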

	\begin{proof}
	Consider the two-qubit $X$ state under the computational basis,
	$$
	\rho_{AB}=\left(\begin{array}{cccc}
		\rho_{11} & 0 & 0 & \rho_{14} \\
		0 & \rho_{22} & \rho_{23} & 0 \\
		0 & \rho_{32} & \rho_{33} & 0 \\
		\rho_{41} & 0 & 0 & \rho_{44}
	\end{array}\right),
	$$
	and we assume that the order of diagonal elements are $\rho_{11}\geq\rho_{22}\geq\rho_{44}>\rho_{33}$ without loss of generality. Then we can consider the unitary evolution
	$$
	U=\left(\begin{array}{cccc}
		1 & 0 & 0 & 0 \\
		0 & 1 & 0 & 0 \\
		0 & 0 & 0 & 1 \\
		0 & 0 & 1 & 0
	\end{array}\right)
	$$
	such that 
	$$
	\Tilde{\rho}_{AB}=\left(\begin{array}{cccc}
		\rho_{11} & 0 & \rho_{14} & 0 \\
		0 & \rho_{22} & 0 & \rho_{23} \\
		\rho_{41} & 0 & \rho_{44} & 0 \\
		0 & \rho_{32} & 0 & \rho_{33}
	\end{array}\right).
	$$
	So the reduced density matrices of $\Tilde{\rho}_{AB}$ with respect to subsystem $A$ and $B$ are
	$$
	\Tilde{\rho}_{A}=\left(\begin{array}{cc}
		\rho_{11}+\rho_{22} & \rho_{14}+\rho_{23} \\
		\rho_{32}+\rho_{41} & \rho_{33}+\rho_{44}
	\end{array}\right)
	$$
	and
	$$
	\Tilde{\rho}_{B}=\left(\begin{array}{cc}
		\rho_{11}+\rho_{44} & 0 \\
		0 & \rho_{22}+\rho_{33}
	\end{array}\right),
	$$
	respectively. According to Lemma 1 and Proposition 1, we have $\mathcal{C}(\Tilde{\rho}_A;H_A)\geq\mathcal{C}(\rho_A;H_A)$. And $\mathcal{C}(\Tilde{\rho}_B;H_B)\geq\mathcal{C}(\rho_B;H_B)$ is due to the fact that
	\begin{equation*}
	\rho_{11}+\rho_{44}-\rho_{22}-\rho_{33}>|\rho_{11}+\rho_{33}-\rho_{22}-\rho_{44}|.
	\end{equation*}
	\end{proof}
	In fact, given a two qubit $X$ battery state $\rho_{AB}$, we can obtain its capacity distribution according to our theory. Then we can use a global unitary evolution to optimize its battery capacity distribution, because there is always a unitary matrix $U$ (possibly the product of a series of unitary matrices) such that the diagonal elements of $U\rho_{AB}U^\dagger$ satisfy one of the conditions in (\ref{e11}). We take the proof process of Theorem 2 as an example. When interaction parameter $\gamma=0$, we transfer part of $RBC_c$ to the capacity of subsystem $A$, and transfer all $RBC_{ic}$ to the capacity of subsystem $B$ through a unitary evolution. Therefore, the battery capacity gain of subsystem is realized. However, when $\gamma>0$, we can only transfer part of $RBC_c$ to the capacity of subsystem $A$, and part of $RBC_{ic}$ to the subsystem $B$, because the interaction between the two subsystems inevitably dissipates part of the battery capacity, the reduced state of sub-system only obtains part of the coherence information. The process schematic diagram is shown in Figure 2.
	
	\begin{figure}[htbp]
		\centering
		\includegraphics[width=0.45\textwidth]{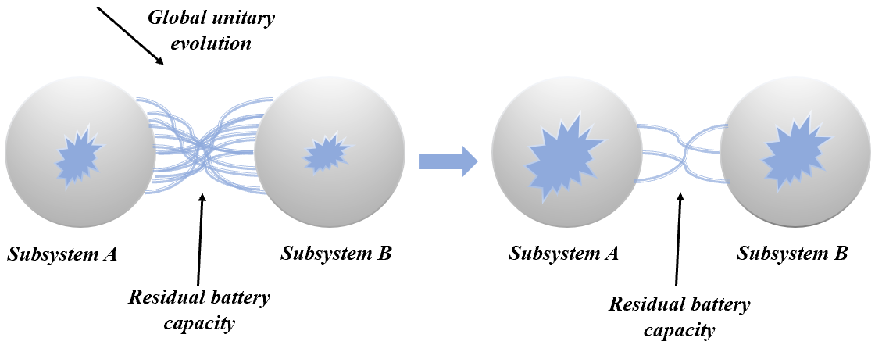}
		\vspace{-1em} \caption{The process of optimizing battery capacity distribution using global unitary evolution as shown in figure.} \label{Fig.2}
	\end{figure}
	
	\mathbi{Example 1}. Let us consider the quantum states to be 2-qubit Bell-diagonal ones given by \cite{tgzh},
	\begin{equation*}
	\varrho=\frac{1}{4}(I_2\otimes I_2+\sum_{i=1}^{3}a_i\sigma_i\otimes\sigma_i),
	\end{equation*} 
	where $a_i\,(i=1,2,3)$ are real constants such that $\varrho$ is a well-defined density matrix.
	
	The eigenvalues of $\varrho$ are
	\begin{small}
	\begin{equation*}
		\begin{split}
			&\lambda_0=\frac{1-a_1-a_2-a_3}{4}, \lambda_1=\frac{1-a_1+a_2+a_3}{4},\\
			&\lambda_2=\frac{1+a_1-a_2+a_3}{4}, \lambda_3=\frac{1+a_1+a_2-a_3}{4}.
		\end{split}
	\end{equation*}
	\end{small}
	Due to symmetry, we assume that $|a_1|\geq|a_2|\geq|a_3|$. Therefore, according to Eq. (\ref{e2}) and (\ref{e3}), we have
	\begin{small}
	\begin{equation*}
	\begin{split}
	\mathcal{C}(\varrho;H_{AB})&=(|a_1|+|a_2|)\sqrt{(\epsilon^A+\epsilon^B)^2+\gamma^2}\\
	&+(|a_1|-|a_2|)\sqrt{(\epsilon^A-\epsilon^B)^2+\gamma^2}\\
	\mathcal{C}(\varrho_d;H_{AB})&=|a_3|\sqrt{(\epsilon^A+\epsilon^B)^2+\gamma^2}\\
	&+|a_3|\sqrt{(\epsilon^A-\epsilon^B)^2+\gamma^2}.
	\end{split}		
	\end{equation*}
	\end{small}
	Note the fact that the reduced states of $\varrho$ are $\varrho_A=\varrho_B=\frac{1}{2}I_2$, which means that $\mathcal{C}(\varrho_A;H_A)=\mathcal{C}(\varrho_B;H_B)=0$. So in this case, the residual battery capacity $\bigtriangleup\mathcal{C}(\varrho)=\mathcal{C}(\varrho;H_{AB})$, the $RBC$ of incoherent part 
	\begin{small}
	\begin{equation*}
	RBC_{ic}=|a_3|\sqrt{(\epsilon^A+\epsilon^B)^2+\gamma^2}+|a_3|\sqrt{(\epsilon^A-\epsilon^B)^2+\gamma^2},
	\end{equation*}
	\end{small}
	and the $RBC$ of coherent part
	\begin{small}
	\begin{equation*}
	\begin{split}
	RBC_c&=(|a_1|+|a_2|-|a_3|)\sqrt{(\epsilon^A+\epsilon^B)^2+\gamma^2}\\
	     &+(|a_1|-|a_2|-|a_3|)\sqrt{(\epsilon^A-\epsilon^B)^2+\gamma^2}.
	\end{split}
	\end{equation*}
	\end{small}
	  From the expression of $RBC_c$, we can see that the $RBC$ of coherent part is directly proportional to $|a_1|$ and $|a_2|$. This is reasonable and intuitive, as an increase in $|a_1|$ and $|a_2|$ will enhance the coherence of $\varrho$, and the contribution of incoherent part $\mathcal{C}(\varrho_d;H_{AB})$ is related to $a_3$, which only appear on the diagonal of the density matrix.
	
	Now we consider optimizing the battery capacity distribution to achieve the capacity gain. One can use unitary matrix $U_{24}$ (the matrix obtained by exchanging the second row and the fourth row of the identity matrix) such that
	$$
	\Tilde{\varrho}=\frac{1}{4}\left(\begin{array}{cccc}
		1+a_3 & a_1-a_2 & 0 & 0 \\
		a_1-a_2 & 1+a_3 & 0 & 0 \\
		0 & 0 & 1-a_3 & a_1+a_2 \\
		0 & 0 & a_1+a_2 & 1-a_3
	\end{array}\right).
	$$
	Thus we have
	$$
	\Tilde{\varrho}_{A}=\frac{1}{2}\left(\begin{array}{cc}
		1+a_3 & 0 \\
		0 & 1-a_3
	\end{array}\right)
	$$
	and
	$$
	\Tilde{\varrho}_{B}=\frac{1}{2}\left(\begin{array}{cc}
		1 & a_1 \\
		a_1 & 1
	\end{array}\right),
	$$
	If we consider $a_1=0.5$, $a_2=0.3$ and $a_3=0.1$, then $\mathcal{C}(\Tilde{\varrho}_A;H_A)=0.2\epsilon^A<RBC_{ic}(\varrho)$, $\mathcal{C}(\Tilde{\varrho}_B;H_B)=\epsilon^B<RBC_c(\varrho)$. In summary, our unitary evolution transforms part of $RBC$ into the battery capacity of subsystem $A$ and subsystem $B$ to achieve capacity gain.
	
	\section{III. The distribution relation of battery capacity for n-qubit X state}
	
	Now, we extend the results of the two -qubit $X$ state to the n-qubit $X$ state.
	
	Consider the n-qubit $X$ state $\rho$ under the computational basis,
	$$\left(
	\begin{array}{ccccc}
		\rho_{11}& 0 &\cdots & 0 & \rho_{1,2^n}\\
		0 & \rho_{22} &\cdots & \rho_{2,2^n-1} & 0\\
		\vdots & \vdots & \ddots & \vdots & \vdots \\
		0 & \rho_{2^n-1,2} & \cdots & \rho_{2^n-1,2^n-1} & 0 \\
		\rho_{2^n,1} & 0 & \cdots & 0 & \rho_{2^n,2^n}
	\end{array} 
	\right ),$$
	where $\sum_{i=1}^{2^n}\rho_{ii}=1$ and $\rho_{ii}\geq0$. The eigenvalues of $\rho$ are
	\begin{equation*}
		\begin{split}
			&\lambda_0=\frac{1}{2}(\rho_{11}+\rho_{2^n,2^n}-\sqrt{(\rho_{11}-\rho_{2^n,2^n})^2+4|\rho_{1,2^n}|^2}),\\
			&\lambda_1=\frac{1}{2}(\rho_{11}+\rho_{2^n,2^n}+\sqrt{(\rho_{11}-\rho_{2^n,2^n})^2+4|\rho_{1,2^n}|^2}),\\
			&\cdots\\
			&\lambda_{2^n-2}=\frac{1}{2}(\rho_{2^{n-1},2^{n-1}}+\rho_{2^{n-1}+1,2^{n-1}+1}-\\
			&\sqrt{(\rho_{2^{n-1},2^{n-1}}-\rho_{2^{n-1}+1,2^{n-1}+1})^2+4|\rho_{2^{n-1},2^{n-1}+1}|^2}),\\
			&\lambda_{2^n-1}=\frac{1}{2}(\rho_{2^{n-1},2^{n-1}}+\rho_{2^{n-1}+1,2^{n-1}+1}+\\
			&\sqrt{(\rho_{2^{n-1},2^{n-1}}-\rho_{2^{n-1}+1,2^{n-1}+1})^2+4|\rho_{2^{n-1},2^{n-1}+1}|^2}).
		\end{split}
	\end{equation*}
	Herein, the entire system Hamiltonian is
	\begin{equation}\label{e14}
	\begin{split}
	H=&H_{A_1}\otimes I_2\otimes...\otimes I_2+I_2\otimes H_{A_2}\otimes...\otimes I_2\\
	&+...+I_2\otimes...\otimes I_2\otimes H_{A_n}+\gamma\sigma_1^{\otimes n}\\
	 =&\epsilon^{A_1}\sigma_3\otimes I_2\otimes...\otimes I_2+I_2\otimes\epsilon^{A_2}\sigma_3\otimes...\otimes I_2\\
	 &+...+I_2\otimes...\otimes I_2\otimes\epsilon^{A_n}\sigma_3+\gamma\sigma_1^{\otimes n},
	\end{split}
	\end{equation}
	where $\epsilon^{A_1}\geq\epsilon^{A_2}\geq...\geq\epsilon^{A_n}\geq0$, and $\gamma\geq0$ is the interaction parameter.

	In addition, the result in Lemma 2 still holds in the n-qubit system.
	\begin{lemma}
	For a n-qubit incoherent state $\rho_\text{ic}$ with diagonal elements $\rho_{11}, \rho_{22},..., \rho_{2^n,2^n}$, one have
	\begin{equation}\label{e15}
	\sum_{i=1}^{n}\mathcal{C}(\rho_\text{ic}^{A_i};H_{A_i})\leq\mathcal{C}(\rho_\text{ic};H).
	\end{equation}
	\end{lemma}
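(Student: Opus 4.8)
The plan is to mirror the two-qubit argument of Lemma~2, but to organize the combinatorics around the bit-flip symmetry of $H$ rather than around an explicit list of eigenvalues. First I would read off the spectral structure of $H$. In the computational basis $\{|b_1\cdots b_n\rangle\}$ the diagonal part $\sum_i\epsilon^{A_i}\sigma_3^{(i)}$ acts on $|\mathbf b\rangle$ by the eigenvalue $E_{\mathbf b}=\sum_{i=1}^{n}\epsilon^{A_i}(-1)^{b_i}$, while $\gamma\,\sigma_1^{\otimes n}$ exchanges $|\mathbf b\rangle$ with $|\bar{\mathbf b}\rangle$ (all bits flipped) and $E_{\bar{\mathbf b}}=-E_{\mathbf b}$. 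Hence $H$ is block diagonal on the $2^{n-1}$ planes $\mathrm{span}\{|\mathbf b\rangle,|\bar{\mathbf b}\rangle\}$, on each of which it is the $2\times2$ matrix with diagonal $(E_{\mathbf b},-E_{\mathbf b})$ and off-diagonal entry $\gamma$, so $\mathrm{spec}(H)=\{\pm\sqrt{E_{\mathbf b}^2+\gamma^2}\}$ over the $2^{n-1}$ pairs. Let $g_1\ge\cdots\ge g_{2^{n-1}}\ge0$ be the values $|E_{\mathbf b}|$ (one per pair) in decreasing order, and let $\alpha_1\ge\cdots\ge\alpha_{2^n}$ be the diagonal entries $\rho_{11},\dots,\rho_{2^n,2^n}$ of $\rho_\text{ic}$ (which are its eigenvalues) in decreasing order. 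Since $\mathrm{spec}(H)$ is symmetric under negation, Eq.~(\ref{e2}) collapses to
\begin{equation*}
\mathcal{C}(\rho_\text{ic};H)=2\sum_{m=1}^{2^{n-1}}\sqrt{g_m^2+\gamma^2}\;\bigl(\alpha_m-\alpha_{2^n+1-m}\bigr),
\end{equation*}
and because every gap $\alpha_m-\alpha_{2^n+1-m}$ is nonnegative and $\sqrt{g_m^2+\gamma^2}\ge g_m$, this is at least $2\sum_m g_m(\alpha_m-\alpha_{2^n+1-m})$. By the rearrangement inequality that last quantity equals $2E_{\max}$, where $E_{\max}:=\max_U\mathrm{Tr}\bigl(U\rho_\text{ic}U^\dagger H_0\bigr)=\max_\pi\sum_{\mathbf b}\rho_{\mathbf b\mathbf b}E_{\pi(\mathbf b)}$, with $H_0$ the $\gamma=0$ Hamiltonian and $\pi$ running over permutations of the $2^n$ basis labels. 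So it suffices to prove $\sum_{i=1}^{n}\mathcal{C}(\rho_\text{ic}^{A_i};H_{A_i})\le 2E_{\max}$.

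The remaining step is purely combinatorial. Each reduced state $\rho_\text{ic}^{A_i}$ is diagonal (a partial trace of the diagonal $\rho_\text{ic}$) with entries $\sum_{b_i=0}\rho_{\mathbf b\mathbf b}$ and $\sum_{b_i=1}\rho_{\mathbf b\mathbf b}$, and $H_{A_i}=\epsilon^{A_i}\sigma_3$, so Eq.~(\ref{e2}) gives $\mathcal{C}(\rho_\text{ic}^{A_i};H_{A_i})=2\epsilon^{A_i}|m_i|$ with $m_i:=\sum_{\mathbf b}\rho_{\mathbf b\mathbf b}(-1)^{b_i}$. Put $c_i=0$ if $m_i\ge0$ and $c_i=1$ otherwise, and $\mathbf c=(c_1,\dots,c_n)$; then $\mathrm{sgn}(m_i)(-1)^{b_i}=(-1)^{b_i\oplus c_i}$, so
\begin{equation*}
\sum_{i=1}^{n}\mathcal{C}(\rho_\text{ic}^{A_i};H_{A_i})=2\sum_{\mathbf b}\rho_{\mathbf b\mathbf b}\sum_{i=1}^{n}\epsilon^{A_i}(-1)^{b_i\oplus c_i}=2\sum_{\mathbf b}\rho_{\mathbf b\mathbf b}\,E_{\mathbf b\oplus\mathbf c}.
\end{equation*}
Since $\mathbf b\mapsto\mathbf b\oplus\mathbf c$ is a bijection of the basis labels, the right-hand side is one of the matched sums $\sum_{\mathbf b}\rho_{\mathbf b\mathbf b}E_{\pi(\mathbf b)}$, hence $\le 2E_{\max}$, and chaining with the previous display closes the proof. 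Only $\epsilon^{A_i}\ge0$ and $\gamma\ge0$ are used, not the ordering of the $\epsilon^{A_i}$.

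I expect this last identification to be the real obstacle: one cannot argue gap by gap, since the term with the smallest energy magnitude $g_{2^{n-1}}$ need not control the corresponding single-qubit capacity, and the necessary slack becomes visible only once $\sum_i\epsilon^{A_i}|m_i|$ is recognized as the energy of $\rho_\text{ic}$ against $H_0$ under the specific bit-flip permutation $\mathbf c$, which is automatically bounded by the unconstrained maximum $E_{\max}$. A secondary point that needs care is the bookkeeping of the negation-symmetric spectra of $H$ and $H_0$ that yields the clean factor $2$ (equivalently the identity $E_{\min}=-E_{\max}$ for such Hamiltonians); the rest is a routine application of Eq.~(\ref{e2}) and the rearrangement inequality.
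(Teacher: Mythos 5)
Your proposal is correct, and it actually supplies something the paper does not: the paper never writes out a proof of Lemma~3, it only asserts that the two-qubit argument of Lemma~2 ``still holds'' for $n$ qubits. Your first step coincides with the paper's (for $n=2$): $H$ is block diagonal on the planes $\mathrm{span}\{|\mathbf b\rangle,|\bar{\mathbf b}\rangle\}$ with spectrum $\pm\sqrt{E_{\mathbf b}^2+\gamma^2}$, the symmetric spectrum collapses Eq.~(\ref{e2}) to the paired-gap form, and $\sqrt{E^2+\gamma^2}\ge|E|$ removes the interaction. Where you genuinely diverge is the combinatorial half. The paper's Lemma~2 chain compares, coefficient by coefficient, the sorted partition differences $\alpha_1+\alpha_2-\alpha_3-\alpha_4$ and $\alpha_1+\alpha_3-\alpha_2-\alpha_4$ against $|\rho_{11}+\rho_{22}-\rho_{33}-\rho_{44}|$ and $|\rho_{11}+\rho_{33}-\rho_{22}-\rho_{44}|$; read term by term the second of these comparisons can fail (e.g.\ $\rho_{11}=0.4,\rho_{22}=0.2,\rho_{33}=0.35,\rho_{44}=0.05$), and the displayed inequality is only saved as a weighted sum using $\epsilon^A\ge\epsilon^B\ge0$ --- a delicacy that would become harder to track for general $n$. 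Your route avoids this entirely: writing $\sum_i\mathcal{C}(\rho_\text{ic}^{A_i};H_{A_i})=2\sum_{\mathbf b}\rho_{\mathbf b\mathbf b}E_{\mathbf b\oplus\mathbf c}$ identifies the sum of subsystem capacities as the energy of $\rho_\text{ic}$ against the interaction-free Hamiltonian under one specific bit-flip permutation, which is automatically bounded by the rearrangement maximum $E_{\max}$, and that maximum is exactly your lower bound $\tfrac12\,\mathcal{C}(\rho_\text{ic};H)\ge\sum_m g_m(\alpha_m-\alpha_{2^n+1-m})=E_{\max}$. What this buys is a complete $n$-qubit proof in one stroke, with no per-term absolute-value bookkeeping and, as you note, no use of the ordering $\epsilon^{A_1}\ge\cdots\ge\epsilon^{A_n}$ (only $\epsilon^{A_i}\ge0$, $\gamma\ge0$); the paper's style of argument, by contrast, matches the structure of its equality analysis (the $2^n$ orderings giving equality at $\gamma=0$) more directly, but is only spelled out for $n=2$.
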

	Similar to the discussion in the two-qubit case, we can get the sequential rotation of  $\rho_{11}\geq\rho_{22}\geq... \geq\rho_{2^n,2^n}$ and a total of $2^n$ cases in the $n$-qubit incoherent state $\rho$ can make the equal sign of Eq.(\ref{e15}) hold when $\gamma=0$. 
	
	Moreover, we note the fact that for the $X$-state $\rho$, its decoherent state $\tau$ and $\rho$ have the same reduced density matrix. That is,
	\begin{equation}\label{e17}
	\rho_{A_i}=\tau_{A_i},\,i=1,2,...,n.
	\end{equation}
	Now, we give the main result of this section.
	\begin{theorem}
	Given the n-qubit $X$ state $\rho$, one has
	\begin{small}
	\begin{equation}\label{e18}
		\mathcal{C}(\rho_{A_1};H_{A_1})+\mathcal{C}(\rho_{A_2};H_{A_2})+...+\mathcal{C}(\rho_{A_n};H_{A_n})\leq\mathcal{C}(\rho;H).
	\end{equation}
	\end{small}
	If $\rho$ is an incoherent state satisfied its diagonal elements meet one of the $2^n$ orders above, and $\gamma=0$, then the equal sign holds.
	\end{theorem}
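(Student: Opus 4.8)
The plan is to reproduce, at the level of $n$ qubits, the three-link chain that establishes Theorem~1. Write $\tau=\mathrm{diag}(\rho_{11},\dots,\rho_{2^n,2^n})$ for the decoherent state of $\rho$. First I would apply Lemma~1 to the positive semidefinite matrix $\rho$: its diagonal is majorized by its spectrum, so $(\rho_{11},\dots,\rho_{2^n,2^n})\prec(\lambda_0,\dots,\lambda_{2^n-1})$; since $\rho_{11},\dots,\rho_{2^n,2^n}$ are precisely the eigenvalues of $\tau$, Proposition~1 (Schur-convexity of $\mathcal{C}(\cdot;H)$) gives $\mathcal{C}(\tau;H)\le\mathcal{C}(\rho;H)$. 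Second, $\tau$ is an incoherent $n$-qubit state, so Lemma~3 gives $\sum_{i=1}^{n}\mathcal{C}(\tau_{A_i};H_{A_i})\le\mathcal{C}(\tau;H)$. Third, the marginal-invariance identity~(\ref{e17}), $\tau_{A_i}=\rho_{A_i}$, lets me replace $\tau_{A_i}$ by $\rho_{A_i}$ in the last sum. Concatenating the three relations yields~(\ref{e18}).

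For the equality clause I would check when each link is saturated. The Schur step is an equality once $\rho=\tau$, i.e. once $\rho$ is incoherent. The Lemma~3 step is an equality, by the remark following Lemma~3, exactly when $\gamma=0$ and the string $(\rho_{11},\dots,\rho_{2^n,2^n})$ is one of the $2^n$ admissible orderings of the sorted sequence $\rho_{11}\ge\rho_{22}\ge\cdots\ge\rho_{2^n,2^n}$. Under both hypotheses both inequalities collapse and~(\ref{e18}) holds with equality, which is exactly the stated conclusion. (Since unitary evolution preserves the spectrum but not incoherence, these conditions are sufficient only, which matches the phrasing of the theorem.)

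The assembly above is short once Lemmas~1 and~3 and identity~(\ref{e17}) are in hand, so the friction lies in those ingredients rather than in the concatenation. The genuinely $n$-qubit point is the equality list of Lemma~3: with $\gamma=0$ the eigenvalues of $H$ are the $2^n$ signed sums $\sum_i\pm\epsilon^{A_i}$ over the Boolean cube, $\mathcal{C}(\tau;H)$ then reduces to a weighted sum of $2^{n-1}$ gaps between the sorted diagonal entries, and one must show this is at least $\sum_{i=1}^{n}\mathcal{C}(\tau_{A_i};H_{A_i})$ (each summand being $2\epsilon^{A_i}$ times the gap between the two eigenvalues of the site-$i$ marginal), with equality precisely for the $2^n$ orderings obtained from the fully sorted one by a fixed $\mathrm{XOR}$-translation of the basis labels; this combinatorial bookkeeping over the antidiagonal pairs is where I expect the main obstacle. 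The second ingredient, identity~(\ref{e17}), rests on the observation that in an $X$-state the only nonzero off-diagonal entries connect a computational-basis string to its bitwise complement, and since a string and its complement differ in \emph{every} qubit, tracing out all but one qubit annihilates all of them, so each single-qubit marginal is diagonal and is read off from the diagonal of $\rho$ alone. Neither step requires an idea absent from the two-qubit case, but the $2^{n-1}$-fold bookkeeping calls for care.
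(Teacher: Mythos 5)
Your proposal is correct and follows essentially the same route as the paper's proof: the same three-link chain of Lemma~1 plus Proposition~1 (giving $\mathcal{C}(\tau;H)\le\mathcal{C}(\rho;H)$), Lemma~3 for the incoherent state $\tau$, and the marginal identity $\tau_{A_i}=\rho_{A_i}$ of Eq.~(\ref{e17}), with the same saturation analysis ($\rho$ incoherent, diagonal in one of the $2^n$ admissible orders, $\gamma=0$) for the equality clause. The only difference is expository: you sketch how Lemma~3 and Eq.~(\ref{e17}) would themselves be verified for $n$ qubits, which the paper simply asserts.
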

	\begin{proof}
	According to Lemma 1, we have
	\begin{equation}\label{e19}
	(\rho_{11},\rho_{22},...,\rho_{2^n,2^n})\prec(\lambda_0,\lambda_1,...,\lambda_{2^n-1}),
	\end{equation}
	where $\rho_{ii}$ and $\lambda_i$ are the diagonal elements and eigenvalues of $\rho$, respectively. Then one has
	\begin{small}
	\begin{equation*}
		\begin{split}
			\mathcal{C}(\rho;H)&\geq\mathcal{C}(\tau;H)\\
			&\geq\mathcal{C}(\tau_{A_1};H_{A_1})+\mathcal{C}(\tau_{A_2};H_{A_2})+...+\mathcal{C}(\tau_{A_n};H_{A_n})\\
			&=\mathcal{C}(\rho_{A_1};H_{A_1})+\mathcal{C}(\rho_{A_2};H_{A_2})+...+\mathcal{C}(\rho_{A_n};H_{A_n}).
		\end{split}
	\end{equation*}
	\end{small}
	The first inequality is due to the combination of Eq. (\ref{e19}) with Proposition 1, the second inequality is based on Lemma 3, and the last equation follows from Eq.(\ref{e17}). Furthermore, if $\rho$ is an incoherent state such that its diagonal elements satisfy the ordering conditions mentioned above and $\gamma=0$, then the inequalities above become equalities.
	\end{proof}
	 We can also define the RBC for an n-qubit $X$ state as
	\begin{equation}\label{e20}
		\bigtriangleup\mathcal{C}=\mathcal{C}(\rho;H)-\mathcal{C}(\rho_{A_1};H_{A_1})-...-\mathcal{C}(\rho_{A_n};H_{A_n}),
	\end{equation}
	define the $RBC$ of the incoherent part and $RBC$ of the coherent part as
	\begin{equation}\label{e21}
		\begin{split}
			&RBC_{ic}(\rho)=\mathcal{C}(\tau;H)-\sum_{i=1}^{n}\mathcal{C}(\rho_{A_i};H_{A_i}),\\
			&RBC_c(\rho)=\mathcal{C}(\rho;H)-\mathcal{C}(\tau;H),
		\end{split}
	\end{equation}
	respectively, where $\tau$ is the decoherent state of $\rho$.
	
	Similar to the proof of Theorem 2, we can achieve the capacity gain of the $n$-qubit $X$ state through a unitary matrix (possibly the product of a series of unitary matrices).
	\begin{theorem}
		For a given $n$ qubits $X$ state $\rho$, there is always a unitary evolution $U$ that enables the subsystems to achieve battery capacity gain.
	\end{theorem}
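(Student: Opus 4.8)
The plan is to construct $U$ as a composition of two unitaries — one that fully diagonalizes $\rho$, followed by a permutation of the computational basis — and to show this already drives the sum of the subsystem capacities up toward its maximal possible value. The key structural input is Eq.~(\ref{e17}): for any $X$ state the single-qubit reductions are diagonal, so by Eq.~(\ref{e2}) one has the closed form
$$
\mathcal{C}(\sigma_{A_i};H_{A_i}) = 2\,\epsilon^{A_i}\,\bigl|\,q_i^{(0)}(\sigma)-q_i^{(1)}(\sigma)\,\bigr|,
$$
where $q_i^{(0)}(\sigma)$ and $q_i^{(1)}(\sigma)$ are the sums of those diagonal entries $\sigma_{kk}$ whose $i$-th bit equals $0$, resp.~$1$. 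Hence the subsystem-capacity sum depends only on the multiset of diagonal entries of the state and on how they are arranged over the $2^n$ basis labels, and increasing it amounts to rearranging — and, via a non-diagonal unitary, reshaping the spectrum of — those entries.

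First I would take $V$ with $V\rho V^\dagger=\mathrm{diag}(\lambda_0,\dots,\lambda_{2^n-1})$, then a permutation $W$ that places $\lambda_0\ge\lambda_1\ge\cdots$ into one of the $2^n$ orderings identified after Lemma~3 as saturating Eq.~(\ref{e15}), and set $U=WV$, $\tilde\rho=U\rho U^\dagger$. When $\gamma=0$, $\tilde\rho$ is then an incoherent state in such an order, so Lemma~3 together with those ordering conditions gives $\sum_i\mathcal{C}(\tilde\rho_{A_i};H_{A_i})=\mathcal{C}(\tilde\rho;H)$; since $\mathcal{C}$ is invariant under unitary evolution this equals $\mathcal{C}(\rho;H)$, which by Theorem~3 is $\ge\sum_i\mathcal{C}(\rho_{A_i};H_{A_i})$. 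Thus this $U$ transfers the \emph{entire} $RBC$ — both $RBC_{ic}$ and $RBC_{c}$ of Eq.~(\ref{e21}) — to the subsystems, and the gain is strict whenever $\rho$ does not already saturate Eq.~(\ref{e18}).

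For general $\gamma\ge0$ I would replace $W$ by the permutation that maximizes the subsystem sum. Consider $f(x)=\max_{\pi}\sum_i 2\epsilon^{A_i}\bigl|\sum_{k:\,k\text{-th bit}=0}x_{\pi(k)}-\sum_{k:\,k\text{-th bit}=1}x_{\pi(k)}\bigr|$ as a function of the vector $x$ of diagonal entries. For each fixed $\pi$ the inner expression is a sum of moduli of affine functions of $x$, hence convex; a maximum of finitely many convex functions is convex, and the outer maximum over all $\pi$ makes $f$ symmetric, so $f$ is a symmetric convex — therefore Schur-convex — functional. Combined with Lemma~1, which gives $(\rho_{11},\dots,\rho_{2^n,2^n})\prec(\lambda_0,\dots,\lambda_{2^n-1})$, this yields
$$
\sum_i\mathcal{C}(\rho_{A_i};H_{A_i})\ \le\ f\bigl(\mathrm{diag}\,\rho\bigr)\ \le\ f\bigl(\mathrm{eig}\,\rho\bigr)\ =\ \sum_i\mathcal{C}(\tilde\rho_{A_i};H_{A_i}),
$$
the first step because the original arrangement of $\mathrm{diag}\,\rho$ is one particular $\pi$, the second by Schur-convexity, the last because $\tilde\rho=WV\rho(WV)^\dagger$ realizes the optimal permutation of the eigenvalues. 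This is the claimed capacity gain, and $U=WV$ is the promised product of unitaries.

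The part I expect to be the real obstacle is not any single estimate but pinning down \emph{strictness}: one must show the chain of inequalities above is strict for every $X$ state that is not already at the ceiling of Eq.~(\ref{e18}) — equivalently, that the only $X$ states for which no unitary can help are exactly those already saturating the monogamy relation (for $\gamma=0$, the incoherent states whose spectrum sits in one of the $2^n$ orders; for $\gamma>0$, the analogous extremal family). This requires tracking the equality cases of Lemma~1 and of the Schur-convexity of $f$, and, for $\gamma>0$, a separate check that $f(\mathrm{eig}\,\rho)$ genuinely exceeds the original value whenever $\rho$ carries coherence or its diagonal is sub-optimally ordered, paralleling the $\gamma=0$ versus $\gamma>0$ discussion already given after Theorem~2.
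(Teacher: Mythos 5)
Your proposal is correct in substance but takes a genuinely different route from the paper. The paper proves Theorem 4 only by pointing back to the proof of Theorem 2: there the unitary is a plain permutation of computational-basis states applied directly to the $X$ state, so that the coherences $\rho_{14},\rho_{23}$ migrate into the reduced state of one subsystem (gain certified by Lemma 1 plus Proposition 1) while the rearranged diagonal improves the other reduced state (gain certified by a direct comparison of diagonal sums); the global coherence is kept, and in general only part of $RBC_c$ and $RBC_{ic}$ is converted into subsystem capacity. You instead first rotate $\rho$ to its eigenbasis and then permute the eigenvalues into one of the saturating orders, certifying optimality of the permutation via the Schur-convexity of your max-over-permutations functional $f$ combined with Lemma 1. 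What your construction buys: at $\gamma=0$ it transfers the \emph{entire} residual battery capacity (the subsystem sum reaches $\mathcal{C}(\rho;H)$, which by Theorem 3 is the ceiling for the diagonal output state), and for $\gamma>0$ it still yields a clean monotonicity statement; what the paper's construction buys: the unitary is a fixed basis permutation that does not require knowing the eigenvectors of $\rho$, whereas your $U=WV$ is state-dependent and erases all global coherence. On the issue you flag: strict gain genuinely cannot hold for every $X$ state (the maximally mixed state is a trivial $X$ state left invariant by every unitary), and the paper's own argument hides this in the ``without loss of generality'' assumption $\rho_{11}\geq\rho_{22}\geq\rho_{44}>\rho_{33}$, so your non-strict chain together with the characterization of when it is strict is no weaker than what the paper actually establishes.
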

	
	This distribution relationship provides upper and lower bounds on the genuine battery capacity of each subsystem, and these bounds still hold for the $n$ qubits $X$ state.
	\begin{observation}
		The genuine battery capacity $\mathcal{C}_{A_i}$ of subsystem $A_i$ for n-qubit $X$ state $\rho$ satisfies
		\begin{equation}\label{e22}
			\mathcal{C}(\rho_{A_i};H_{A_i})\leq \mathcal{C}_{A_i}\leq\mathcal{C}(\rho_{A_i};H_{A_i})+\bigtriangleup\mathcal{C}, i=1,...,n.
		\end{equation}
	\end{observation}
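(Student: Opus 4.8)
The plan is to treat the \emph{genuine} battery capacity $\mathcal{C}_{A_i}$ as the largest capacity that subsystem $A_i$ can be endowed with by a global unitary acting on the whole $X$-battery (the admissible unitaries being of the type used in Theorems 2 and 4, which preserve the $X$ form), i.e.\ $\mathcal{C}_{A_i}=\max_{U}\mathcal{C}\big((U\rho U^{\dagger})_{A_i};H_{A_i}\big)$. With this reading the lower bound in (\ref{e22}) is immediate, since the identity is admissible and leaves the reduced state $\rho_{A_i}$ unchanged, so $\mathcal{C}_{A_i}\ge\mathcal{C}(\rho_{A_i};H_{A_i})$.

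For the upper bound I would fix an admissible $U$ and put $\tilde\rho=U\rho U^{\dagger}$. Because $\tilde\rho$ is again an $n$-qubit $X$ state with the same spectrum as $\rho$, Eq.~(\ref{e2}) gives $\mathcal{C}(\tilde\rho;H)=\mathcal{C}(\rho;H)$, and Theorem 3 applied to $\tilde\rho$ yields $\sum_{j}\mathcal{C}(\tilde\rho_{A_j};H_{A_j})\le\mathcal{C}(\rho;H)$, hence $\mathcal{C}(\tilde\rho_{A_i};H_{A_i})\le\mathcal{C}(\rho;H)-\sum_{j\ne i}\mathcal{C}(\tilde\rho_{A_j};H_{A_j})$. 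The point is then to control the right-hand side: if the capacity-maximizing $U$ can be chosen so that none of the other reduced capacities falls below its bare value, $\mathcal{C}(\tilde\rho_{A_j};H_{A_j})\ge\mathcal{C}(\rho_{A_j};H_{A_j})$ for $j\ne i$, then $\mathcal{C}(\tilde\rho_{A_i};H_{A_i})\le\mathcal{C}(\rho;H)-\sum_{j\ne i}\mathcal{C}(\rho_{A_j};H_{A_j})=\mathcal{C}(\rho_{A_i};H_{A_i})+\bigtriangleup\mathcal{C}$ by the definition (\ref{e20}) of the RBC; maximizing over $U$ then gives (\ref{e22}). To establish the needed non-decrease I would argue exactly as in the proof of Theorem 2: the optimizer can be taken among the permutations that reorder the diagonal of $\rho$ into one of the $2^n$ admissible orderings discussed after Lemma 3, and for such permutations Lemma 1 together with Proposition 1 (and the explicit computation of each $\mathcal{C}(\rho_{A_j};H_{A_j})$ from the diagonal of $\rho$) forces every $\mathcal{C}(\tilde\rho_{A_j};H_{A_j})$ to be at least $\mathcal{C}(\rho_{A_j};H_{A_j})$.

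The main obstacle is precisely this last step — ruling out that maximizing $A_i$'s share forces some other subsystem's capacity down. Dropping it, the bare argument only delivers the weaker estimate $\mathcal{C}_{A_i}\le\mathcal{C}(\rho;H)$. So the essential work is combinatorial, over the $2^n$ orderings of $\mathrm{diag}(\rho)$: one must check that the permutation maximizing $\mathcal{C}(\tilde\rho_{A_i};H_{A_i})$ simultaneously keeps the remaining $n-1$ reduced capacities non-decreasing, which is the natural $n$-qubit generalization of the inequality $\rho_{11}+\rho_{44}-\rho_{22}-\rho_{33}>|\rho_{11}+\rho_{33}-\rho_{22}-\rho_{44}|$ used in Theorem 2. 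Once that is in place, the observation follows by assembling the lower and upper bounds.
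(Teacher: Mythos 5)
The step you yourself flag as ``the essential work'' is not merely unproven --- it is false, and with it the whole route. Take the two-qubit diagonal (hence $X$) state $\rho=\mathrm{diag}(\tfrac12,\tfrac12,0,0)$ with $\gamma=0$. Then $\mathcal{C}(\rho_A;H_A)=2\epsilon^A$, $\mathcal{C}(\rho_B;H_B)=0$, $\mathcal{C}(\rho;H_{AB})=2\epsilon^A$, so $\bigtriangleup\mathcal{C}=0$. The permutation swapping $|01\rangle$ and $|10\rangle$ is admissible even in your restricted sense (it sends this $X$ state to the $X$ state $\mathrm{diag}(\tfrac12,0,\tfrac12,0)$), yet it gives $\mathcal{C}(\tilde{\rho}_B;H_B)=2\epsilon^B>0=\mathcal{C}(\rho_B;H_B)+\bigtriangleup\mathcal{C}$ while $\mathcal{C}(\tilde{\rho}_A;H_A)$ drops from $2\epsilon^A$ to $0$. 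So (i) the unitary maximizing one subsystem's capacity can strictly decrease another's, and (ii) under your operational definition $\mathcal{C}_{A_i}=\max_U\mathcal{C}\big((U\rho U^\dagger)_{A_i};H_{A_i}\big)$ the upper bound in (\ref{e22}) is itself violated, so no combinatorial analysis over the $2^n$ orderings can close the gap --- the needed ``non-decrease of the other reduced capacities'' is exactly what fails. A secondary error: the unitaries of Theorem 2 do \emph{not} preserve the $X$ form (the $\tilde{\rho}_{AB}$ in that proof has nonzero entries at positions $(1,3)$, $(2,4)$, etc.), so even for those evolutions your appeal to Theorem 3 applied to $\tilde{\rho}$ is unjustified.

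The paper, by contrast, offers no proof and no maximization-over-unitaries definition of the genuine capacity; the Observation is intended as a direct reading of the decomposition established by Theorem 3, namely $\mathcal{C}(\rho;H)=\sum_j\mathcal{C}(\rho_{A_j};H_{A_j})+\bigtriangleup\mathcal{C}$ with $\bigtriangleup\mathcal{C}\geq0$: the genuine capacity of $A_i$ is understood as its local capacity $\mathcal{C}(\rho_{A_i};H_{A_i})$ plus whatever nonnegative portion of the shared residual capacity is attributed to $A_i$, and since that portion lies between $0$ and $\bigtriangleup\mathcal{C}$, both bounds in (\ref{e22}) follow at once. If you want a genuinely operational statement about what unitaries can do, the correct ones in this framework are Theorems 2 and 4 (existence of a capacity-gaining $U$), not a uniform upper bound of the kind your proposal tries to prove.
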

	
Specifically, the monogamy relation (\ref{e18}) for the three-qubit case is
\begin{equation}\label{e23}
	\mathcal{C}(\rho_{A};H_{A})+\mathcal{C}(\rho_{B};H_{B})+\mathcal{C}(\rho_{C};H_{C})\leq\mathcal{C}(\rho;H).
\end{equation}
Such a result is reasonable. The battery capacity of the whole system includes not only the battery capacity of subsystems $A$, $B$, and $C$ but also the common battery capacity of the three subsystems. However, like the coherence information on the minor diagonal and some incoherence information on the diagonal, the RBC is lost due to the reduced density matrix. See Figure 3.
\begin{figure}[htbp]
	\centering
	\includegraphics[width=0.52\textwidth]{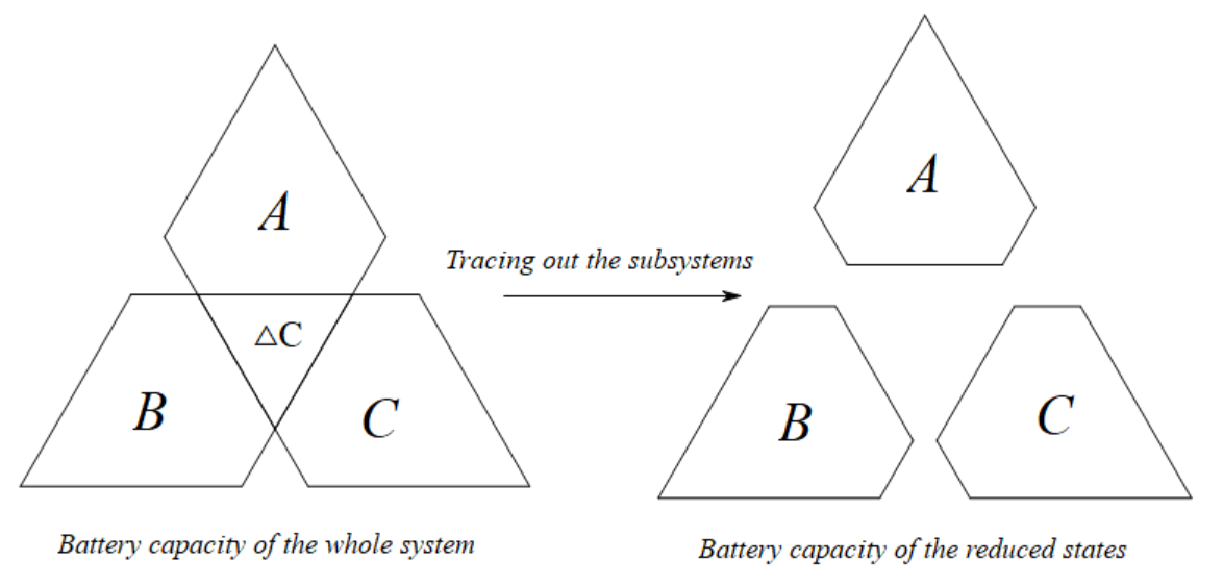}
	\vspace{-1em} \caption{The relationship between $\mathcal{C}(\rho_A;H_A)$, $\mathcal{C}(\rho_B;H_B)$, $\mathcal{C}(\rho_C;H_C)$ and $\bigtriangleup\mathcal{C}$.} \label{Fig.3}
\end{figure}
	
	Moreover, we can prove another capacity distribution relationship with respect to any three qubit $X$ state.
	\begin{theorem}
	Given a three-qubit $X$ state $\rho_{ABC}$, the following inequalities hold:
	\begin{small}
		\begin{equation}\label{e24}
		\begin{split}
		&\mathcal{C}(\rho_{ABC};H_{ABC})\geq\mathcal{C}(\rho_{AB};H_{AB})+\mathcal{C}(\rho_{C};H_{C})+RBC_c,\\
		&\mathcal{C}(\rho_{ABC};H_{ABC})\geq\mathcal{C}(\rho_{AC};H_{AC})+\mathcal{C}(\rho_{B};H_{B})+RBC_c,\\
		&\mathcal{C}(\rho_{ABC};H_{ABC})\geq\mathcal{C}(\rho_{BC};H_{BC})+\mathcal{C}(\rho_{A};H_{A})+RBC_c,\\
		\end{split}
	\end{equation}
	\end{small}
	where $RBC_c$ is the residual battery capacity of coherent part.
	\end{theorem}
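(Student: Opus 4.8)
Here is how I would approach the proof of Theorem~5.

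The plan is to strip off the coherent residue, reduce to an incoherent monogamy inequality, and then run the spectral/majorization computation already used for Lemma~2 and Theorem~3. By the definition of $RBC_c$ in Eq.~(\ref{e21}), $RBC_c=\mathcal{C}(\rho_{ABC};H_{ABC})-\mathcal{C}(\tau;H_{ABC})$ with $\tau$ the decoherent state of $\rho_{ABC}$, so the first line of Eq.~(\ref{e24}) is equivalent to $\mathcal{C}(\tau;H_{ABC})\geq\mathcal{C}(\rho_{AB};H_{AB})+\mathcal{C}(\rho_{C};H_{C})$, and the other two lines are the same statement for the cuts $AC|B$ and $BC|A$. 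The key preliminary remark is that for a three-qubit $X$ state the reductions $\rho_{AB}$ and $\rho_{C}$ are \emph{diagonal}: the nonzero off-diagonal entries of $\rho_{ABC}$ only couple $|i_1i_2i_3\rangle$ with $|\bar i_1\bar i_2\bar i_3\rangle$, and such a pair can never be matched after a partial trace over a single qubit, so $\rho_{AB}$ and $\rho_C$ equal the corresponding reductions of $\tau$ and depend only on the diagonal of $\rho_{ABC}$. Hence it suffices to prove the displayed inequality, a ``coarse-grained'' version of Lemma~3 for the bipartition $AB|C$ rather than for single qubits.

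Next I would diagonalize the Hamiltonians as in the proof of Lemma~2. In the bit-flip-pair basis $\{|i_1i_2i_3\rangle,|\bar i_1\bar i_2\bar i_3\rangle\}$, $H_{ABC}$ is block diagonal with $2\times 2$ blocks $\left(\begin{smallmatrix} s & \gamma \\ \gamma & -s \end{smallmatrix}\right)$, $s=(-1)^{i_1}\epsilon^A+(-1)^{i_2}\epsilon^B+(-1)^{i_3}\epsilon^C$, so its spectrum is $\{\pm\sqrt{(\epsilon^A\pm\epsilon^B\pm\epsilon^C)^2+\gamma^2}\}$; with $\epsilon^A\geq\epsilon^B\geq\epsilon^C\geq 0$ the nonnegative eigenvalues, in decreasing order $e_1\geq e_2\geq e_3\geq e_4$, come from the sign patterns $(+,+,+),(+,+,-),(+,-,+),(+,-,-)$. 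Writing the sorted diagonal of $\tau$ as $\alpha_1\geq\cdots\geq\alpha_8$, Eq.~(\ref{e2}) gives $\mathcal{C}(\tau;H_{ABC})=2\sum_{k=1}^{4}e_k(\alpha_k-\alpha_{9-k})$. Using $e_k\geq|\pm\epsilon^A\pm\epsilon^B\pm\epsilon^C|\geq\pm\epsilon^A\pm\epsilon^B\pm\epsilon^C$ together with $\alpha_k-\alpha_{9-k}\geq 0$ and collecting powers of $\epsilon^A,\epsilon^B,\epsilon^C$, one gets $\mathcal{C}(\tau;H_{ABC})\geq 2\epsilon^A X_A+2\epsilon^B X_B+2\epsilon^C X_C$, where $X_A,X_B,X_C$ are explicit alternating partial sums of the $D_k:=\alpha_k-\alpha_{9-k}$. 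On the other side, with $H_{AB}=\epsilon^A\sigma_3\otimes I_2+\epsilon^B I_2\otimes\sigma_3$ (the natural two-qubit Hamiltonian consistent with Eq.~(\ref{e14}), whose only interaction term is the global $n$-body one) and $H_C=\epsilon^C\sigma_3$, a direct computation gives $\mathcal{C}(\rho_{AB};H_{AB})+\mathcal{C}(\rho_C;H_C)=2\epsilon^A Y_A+2\epsilon^B Y_B+2\epsilon^C Y_C$, where $Y_A,Y_B$ are the two alternating sums of the sorted diagonal $\beta_1\geq\cdots\geq\beta_4$ of $\rho_{AB}$ and $Y_C=|(\rho_C)_{00}-(\rho_C)_{11}|$.

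It remains to compare the two expressions. Each $\beta_j$ is a sum of two of the eight diagonal entries of $\rho_{ABC}$, and $(\rho_C)_{00}-(\rho_C)_{11}$ is a difference of two complementary four-term subsums of them, so the Hardy--Littlewood--P\'olya inequality underlying Lemma~1 (the sum of the $m$ largest entries dominates any $m$-term subsum) bounds $\beta_1+\beta_2$, $\beta_1-\beta_4$, $|(\rho_C)_{00}-(\rho_C)_{11}|$, etc., by the appropriate sorted partial sums of the $\alpha_k$; the point is to extract from this the three cumulative inequalities $X_A\geq Y_A$, $X_A+X_B\geq Y_A+Y_B$, $X_A+X_B+X_C\geq Y_A+Y_B+Y_C$. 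Since $(\epsilon^A,\epsilon^B,\epsilon^C)$ is nonincreasing and nonnegative, an Abel-summation rearrangement then upgrades these to $\epsilon^A X_A+\epsilon^B X_B+\epsilon^C X_C\geq\epsilon^A Y_A+\epsilon^B Y_B+\epsilon^C Y_C$, which is the claim (equality already occurring for GHZ-type configurations, where the partial-sum bounds are saturated). The $AC|B$ and $BC|A$ inequalities follow by the identical argument after permuting which pair of weights is grouped, the ordering $\epsilon^A\geq\epsilon^B\geq\epsilon^C$ still driving the rearrangement.

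The genuine work --- and the step most prone to error if rushed --- is this comparison: the alternating sums $X_A,X_B,X_C$ do \emph{not} dominate $Y_A,Y_B,Y_C$ term by term (the $\epsilon^C$ piece can point the wrong way), so one must establish the three cumulative inequalities and then use the ordering of the $\epsilon$'s, all while keeping the square roots $\sqrt{(\cdot)^2+\gamma^2}$ so the estimates remain valid for $\gamma>0$. Everything upstream of that is a transcription of the Lemma~2 / Theorem~3 computation.
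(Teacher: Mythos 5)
Your plan coincides with the paper's own proof up to the last step: peel off $RBC_c$ to reduce to the decoherent state $\tau$, use $\rho_{AB}=\tau_{AB}$, $\rho_C=\tau_C$ for $X$ states, expand $\mathcal{C}(\tau;H_{ABC})=2\sum_{k=1}^4 e_k D_k$ with $D_k=\alpha_k-\alpha_{9-k}$, and replace $e_k$ by the signed combinations $\pm\epsilon^A\pm\epsilon^B\pm\epsilon^C$ to get the surrogate $2(\epsilon^A X_A+\epsilon^B X_B+\epsilon^C X_C)$ with $X_A=D_1+D_2+D_3+D_4$, $X_B=D_1+D_2-D_3-D_4$, $X_C=D_1-D_2+D_3-D_4$. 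The gap is exactly at the step you flag as the crux: the third cumulative inequality $X_A+X_B+X_C\geq Y_A+Y_B+Y_C$ is false, so the Abel-summation upgrade cannot be run. Counterexample: take the diagonal state with $\tau_{11}=\tau_{22}=\tau_{33}=\tau_{55}=\tfrac14$ (population $\tfrac14$ on $|000\rangle,|001\rangle,|010\rangle,|100\rangle$) and the other diagonal entries zero. Its sorted diagonal is $(\tfrac14,\tfrac14,\tfrac14,\tfrac14,0,0,0,0)$, so $D_1=\cdots=D_4=\tfrac14$ and $X_A+X_B+X_C=1$; but $\tau_{AB}$ has sorted diagonal $(\tfrac12,\tfrac14,\tfrac14,0)$ and $\tau_C=\mathrm{diag}(\tfrac34,\tfrac14)$, giving $Y_A=Y_B=Y_C=\tfrac12$, i.e. $Y_A+Y_B+Y_C=\tfrac32>1$. (Your first two cumulative inequalities do follow from the Hardy--Littlewood--P\'olya partial-sum bounds, since $\beta_1\leq\alpha_1+\alpha_2$, $\beta_1+\beta_2\leq\alpha_1+\alpha_2+\alpha_3+\alpha_4$, $\beta_4\geq\alpha_7+\alpha_8$, $\beta_3+\beta_4\geq\alpha_5+\cdots+\alpha_8$; it is only the third that breaks.)

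The root cause is that the surrogate discards the absolute value in $e_4\geq|\epsilon^A-\epsilon^B-\epsilon^C|$. When $\epsilon^A<\epsilon^B+\epsilon^C$ the linear bound can drop strictly below the quantity you must dominate: in the example above, with $\gamma=0$ and $\epsilon^A=\epsilon^B=\epsilon^C=\epsilon$, the surrogate equals $2\epsilon$ while $\mathcal{C}(\tau_{AB};H_{AB})+\mathcal{C}(\tau_C;H_C)=3\epsilon$; the theorem itself survives only because the true capacity is $\mathcal{C}(\tau;H_{ABC})=3\epsilon$, i.e. with equality. Hence no argument routed solely through $2(\epsilon^A X_A+\epsilon^B X_B+\epsilon^C X_C)$ can close the proof: in the regime $\epsilon^A<\epsilon^B+\epsilon^C$ you must retain $|\epsilon^A-\epsilon^B-\epsilon^C|$ (or the full $\sqrt{(\epsilon^A-\epsilon^B-\epsilon^C)^2+\gamma^2}$) and use a different grouping of the $D_k$, e.g. a case split on $\epsilon^A\gtrless\epsilon^B+\epsilon^C$ with the exact eigenvalue pairing. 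For comparison, the paper's printed proof is delicate at the very same place: it equates this same surrogate with a configuration asserted to maximize $\mathcal{C}(\tau_{AB};H_{AB})+\mathcal{C}(\tau_C;H_C)$, and the state above exceeds that asserted maximum when the $\epsilon$'s are comparable; so your honest attempt to make the comparison explicit actually exposes the point where both arguments need to be strengthened rather than supplying the missing estimate.
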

	\begin{proof}
	We only prove the first inequality, and the proof approach for the remaining two is similar.
	
	Note that
	\begin{equation}\label{e25}
	\mathcal{C}(\rho_{ABC};H_{ABC})=\mathcal{C}(\tau_{ABC};H_{ABC})+RBC_c,
	\end{equation}
	here $\tau_{ABC}$ is the decoherent state of $\rho_{ABC}$. According to Eq. (\ref{e17}), we only need to demonstrate that
	\begin{small}
		\begin{equation}\label{e26}
				\mathcal{C}(\tau_{ABC};H_{ABC})\geq\mathcal{C}(\tau_{AB};H_{AB})+\mathcal{C}(\tau_{C};H_{C}).
		\end{equation}
	\end{small}
	Let the diagonal elements of $\tau_{ABC}$ be $\tau_{11}, \tau_{22},\dots,\tau_{88}$, then the reduced density matrices for subsystems $AB$ and $C$ are
	$$
	\tau_{AB}=\left(\begin{array}{cccc}
		\tau_{11}+\tau_{22} & 0 & 0 & 0 \\
		0 & \tau_{33}+\tau_{44} & 0 & 0 \\
		0 & 0 & \tau_{55}+\tau_{66} & 0 \\
		0 & 0 & 0 & \tau_{77}+\tau_{88}
	\end{array}\right),
	$$
	and
	$$
	\tau_{C}=\left(\begin{array}{cc}
		\tau_{11}+\tau_{33}+\tau_{55}+\tau_{77} & 0 \\
		0 & \tau_{22}+\tau_{44}+\tau_{66}+\tau_{88}
	\end{array}\right).
	$$
	We set the descending order of $\tau_{11},\tau_{22},...,\tau_{88}$ as $\omega_1\geq\omega_2\geq...\geq\omega_8$, and the descending order of the diagonal elements of $\tau_{AB}$ is $\mu_1\geq\mu_2\geq\mu_3\geq\mu_4$. Therefore, one can calculate that
	 \begin{footnotesize}
	 \begin{equation*}
	 	\begin{split}
	 		&\mathcal{C}(\tau_{ABC};H_{ABC})=2(\omega_1-\omega_8)\epsilon_1+2(\omega_2-\omega_7)\epsilon_2+2(\omega_3-\omega_6)\epsilon_3\\
	 		&+2(\omega_4-\omega_5)\epsilon_4\\
	 		&\mathcal{C}(\tau_{AB};H_{AB})=2(\mu_1+\mu_2-\mu_3-\mu_4)\epsilon^A\\
	 		&+2(\mu_1+\mu_3-\mu_2-\mu_4)\epsilon^B,\\
	 		&\mathcal{C}(\tau_C;H_C)=2|\tau_{11}+\tau_{33}+\tau_{55}+\tau_{77}-\tau_{22}-\tau_{44}-\tau_{66}-\tau_{88}|\epsilon^C,
	 	\end{split}
	 \end{equation*}
	 \end{footnotesize}
	 where
	 \begin{footnotesize}
	 \begin{equation*}
	 	\begin{split}
	 	&\epsilon_1=\sqrt{(\epsilon^A+\epsilon^B+\epsilon^C)^2+\gamma^2}, \epsilon_2=\sqrt{(\epsilon^A+\epsilon^B-\epsilon^C)^2+\gamma^2},\\
	 	&\epsilon_3=\sqrt{(\epsilon^A-\epsilon^B+\epsilon^C)^2+\gamma^2}, \epsilon_4=\sqrt{(\epsilon^A-\epsilon^B-\epsilon^C)^2+\gamma^2}.
	 	\end{split}
	 \end{equation*}
	 \end{footnotesize}
	 
	 Given a state $\tau_{ABC}$, $\mathcal{C}(\tau_{ABC};H_{ABC})$ is a constant value because the order of elements $\tau_{11},...\tau_{88}$ is determined. Thus we consider the case where $\mathcal{C}(\tau_{AB};H_{AB})+\mathcal{C}(\tau_C;H_C)$ takes the maximum value, which corresponds to
	 \begin{equation*}
	 \begin{split}
	 &\mu_1+\mu_2=\omega_1+\omega_2+\omega_3+\omega_4,\\
	 &\mu_3+\mu_4=\omega_5+\omega_6+\omega_7+\omega_8,\\
	 &\mu_1+\mu_3=\omega_1+\omega_2+\omega_5+\omega_6,\\
	 &\mu_2+\mu_4=\omega_3+\omega_4+\omega_7+\omega_8,
	 \end{split}
	 \end{equation*}
	 and 
	 \begin{small}
	 \begin{equation*}
	 	\mathcal{C}(\tau_C;H_C)=2(\omega_1+\omega_3+\omega_5+\omega_7-\omega_2-\omega_4-\omega_6-\omega_8)\epsilon^C.
	 \end{equation*}
	 \end{small}
	 
	 In this case,
	 \begin{small}
	 	\begin{equation*}
	 	\begin{split}
	 	\mathcal{C}(\tau_{ABC};H_{ABC})&=2(\omega_1-\omega_8)\epsilon_1+2(\omega_2-\omega_7)\epsilon_2+2(\omega_3-\omega_6)\epsilon_3\\
	 	&+2(\omega_4-\omega_5)\epsilon_4\\
	 	&\geq2(\omega_1+\omega_2+\omega_3+\omega_4-\omega_5-\omega_6-\omega_7-\omega_8)\epsilon^A\\
	 	&+2(\omega_1+\omega_2+\omega_5+\omega_6-\omega_3-\omega_4-\omega_7-\omega_8)\epsilon^B\\
	 	&+2(\omega_1+\omega_3+\omega_5+\omega_7-\omega_2-\omega_4-\omega_6-\omega_8)\epsilon^C\\
	 	&=\mathcal{C}(\tau_{AB};H_{AB})+\mathcal{C}(\tau_C;H_C).
	 	\end{split}
	 	\end{equation*}
	 \end{small}
	 So we have
	 \begin{small}
	 \begin{equation*}
	 	\begin{split}
	 		\mathcal{C}(\rho_{ABC};H_{ABC})&=\mathcal{C}(\tau_{ABC};H_{ABC})+RBC_c\\
	 		&\geq\mathcal{C}(\tau_{AB};H_{AB})+\mathcal{C}(\tau_{C};H_{C})+RBC_c\\
	 		&=\mathcal{C}(\rho_{AB};H_{AB})+\mathcal{C}(\rho_{C};H_{C})+RBC_c.
	 	\end{split}
	 \end{equation*}
	 \end{small}
	\end{proof}
	
	\mathbi{Example 2.}
	In addition to the three distribution relationships in Theorem 3, the following capacity relationships may also be intuitively correct:
	\begin{small}
		\begin{equation*}
			\begin{split}
				&\mathcal{C}(\rho_{ABC};H_{ABC})\geq\mathcal{C}(\rho_{AB};H_{AB})+\mathcal{C}(\rho_{AC};H_{AC})-\mathcal{C}(\rho_A;H_A),\\
				&\mathcal{C}(\rho_{ABC};H_{ABC})\geq\mathcal{C}(\rho_{AB};H_{AB})+\mathcal{C}(\rho_{BC};H_{BC})-\mathcal{C}(\rho_B;H_B),\\
				&\mathcal{C}(\rho_{ABC};H_{ABC})\geq\mathcal{C}(\rho_{AC};H_{AC})+\mathcal{C}(\rho_{BC};H_{BC})-\mathcal{C}(\rho_C;H_C).
			\end{split}
		\end{equation*}
	\end{small}
Unfortunately, we can find corresponding states that violate these three inequalities. For simplicity, one consider the case of $\gamma=0$ as follows.

For the first inequality, we consider
$$\rho^1=\frac{1}{36}\left(
\begin{array}{cccccccc}
	8 & 0 & 0 & 0 & 0 & 0 & 0 & 0\\
	0 & 7 & 0 & 0 & 0 & 0 & 0 & 0\\
	0 & 0 & 2 & 0 & 0 & 0 & 0 & 0\\
	0 & 0 & 0 & 1 & 0 & 0 & 0 & 0\\
	0 & 0 & 0 & 0 & 4 & 0 & 0 & 0\\
	0 & 0 & 0 & 0 & 0 & 3 & 0 & 0\\
	0 & 0 & 0 & 0 & 0 & 0 & 6 & 0\\
	0 & 0 & 0 & 0 & 0 & 0 & 0 & 5\\
\end{array} 
\right ).$$
By calculation, we have
\begin{equation*}
\begin{split}
&\mathcal{C}(\rho^1;H_{ABC})=\frac{1}{36}(32\epsilon^A+16\epsilon^B+8\epsilon^C),\\
&\mathcal{C}(\rho^1_{AB};H_{AB})=\frac{1}{36}(32\epsilon^A+16\epsilon^B),\\
&\mathcal{C}(\rho^1_{AC};H_{AC})=\frac{8\epsilon^A}{36}, \mathcal{C}(\rho_A^1;H_A)=0.\\
\end{split}
\end{equation*}
So, we have
\begin{small}
	\begin{equation*}
			\mathcal{C}(\rho^1;H_{ABC})\leq\mathcal{C}(\rho_{AB}^1;H_{AB})+\mathcal{C}(\rho_{AC}^1;H_{AC})-\mathcal{C}(\rho_A^1;H_A),
	\end{equation*}
\end{small}
which violates the first inequality.

For the second inequality, consider the state
$$\rho^2=\frac{1}{36}\left(
\begin{array}{cccccccc}
	8 & 0 & 0 & 0 & 0 & 0 & 0 & 0\\
	0 & 7 & 0 & 0 & 0 & 0 & 0 & 0\\
	0 & 0 & 4 & 0 & 0 & 0 & 0 & 0\\
	0 & 0 & 0 & 3 & 0 & 0 & 0 & 0\\
	0 & 0 & 0 & 0 & 2 & 0 & 0 & 0\\
	0 & 0 & 0 & 0 & 0 & 1 & 0 & 0\\
	0 & 0 & 0 & 0 & 0 & 0 & 6 & 0\\
	0 & 0 & 0 & 0 & 0 & 0 & 0 & 5\\
\end{array} 
\right ).$$
The calculation shows that
\begin{equation*}
	\begin{split}
		&\mathcal{C}(\rho^2;H_{ABC})=\frac{1}{36}(32\epsilon^A+16\epsilon^B+8\epsilon^C),\\
		&\mathcal{C}(\rho^2_{AB};H_{AB})=\frac{1}{36}(32\epsilon^A+16\epsilon^B),\\
		&\mathcal{C}(\rho^2_{BC};H_{BC})=\frac{8\epsilon^B}{36}, \mathcal{C}(\rho_B^2;H_B)=0.\\
	\end{split}
\end{equation*}
Therefore, one has
\begin{small}
	\begin{equation*}
		\mathcal{C}(\rho^2;H_{ABC})\leq\mathcal{C}(\rho_{AB}^2;H_{AB})+\mathcal{C}(\rho_{BC}^2;H_{BC})-\mathcal{C}(\rho_B^2;H_B),
	\end{equation*}
\end{small}
which violates the second inequality. 

For the last inequality, the considered state is
$$\rho^3=\frac{1}{36}\left(
\begin{array}{cccccccc}
	8 & 0 & 0 & 0 & 0 & 0 & 0 & 0\\
	0 & 4 & 0 & 0 & 0 & 0 & 0 & 0\\
	0 & 0 & 7 & 0 & 0 & 0 & 0 & 0\\
	0 & 0 & 0 & 3 & 0 & 0 & 0 & 0\\
	0 & 0 & 0 & 0 & 6 & 0 & 0 & 0\\
	0 & 0 & 0 & 0 & 0 & 2 & 0 & 0\\
	0 & 0 & 0 & 0 & 0 & 0 & 5 & 0\\
	0 & 0 & 0 & 0 & 0 & 0 & 0 & 1\\
\end{array} 
\right ).$$
It can be verified that
\begin{equation*}
	\begin{split}
		&\mathcal{C}(\rho^3;H_{ABC})=\frac{1}{36}(32\epsilon^A+16\epsilon^B+8\epsilon^C),\\
		&\mathcal{C}(\rho^3_{AC};H_{AC})=\frac{1}{36}(32\epsilon^A+16\epsilon^C),\\
		&\mathcal{C}(\rho^3_{BC};H_{BC})=\frac{1}{36}(32\epsilon^B+8\epsilon^C), \mathcal{C}(\rho_C^3;H_C)=\frac{32\epsilon^C}{36}.\\
	\end{split}
\end{equation*}
Thus we have
\begin{small}
	\begin{equation*}
		\mathcal{C}(\rho^3;H_{ABC})\leq\mathcal{C}(\rho_{AC}^3;H_{AC})+\mathcal{C}(\rho_{BC}^3;H_{BC})-\mathcal{C}(\rho_C^3;H_C),
	\end{equation*}
\end{small}
which violates the last inequality. It is easy to see that $\mathcal{C}(\rho_{ABC};H_{ABC})$ is a continuous functional of $\gamma$. Therefore, if these inequalities are violated when $\gamma=0$, then there must be a positive real number $\gamma_c$ so that when $\gamma\in[0,\gamma_c)$, these three inequalities are violated.

\mathbi{Example 3.} Consider the $n$-qubit GHZ state $|GHZ\rangle=\frac{1}{\sqrt{2}}\sum_{i=0}^{1}|i\rangle^{\otimes n}$ affected by the white noise,
\begin{equation*}
\rho_\beta=\beta|GHZ\rangle\langle GHZ|+\frac{1-\beta}{2^n}\mathbbm{1},
\end{equation*}
where $0\leq\beta\leq1$.

Due to its central role in many quantum tasks \cite{kchk,lsbl,gjmg}, we believe it also has potential value in the research of quantum batteries. The eigenvalues of $\rho_\beta$ are
\begin{equation*}
\begin{split}
&\lambda_i=\frac{1-\beta}{2^n},\,i=0,...,2^n-2,\\
&\lambda_{2^n-1}=\frac{1-\beta}{2^n}+\beta,
\end{split}
\end{equation*}
and the reduced density matrices are $\rho_\beta^{A_1}=\rho_\beta^{A_2}=...=\rho_\beta^{A_n}=\frac{1}{2}I_2$. For the convenience, in all the numerical calculations we will consider $\epsilon^{A_1}=0.5$, $\epsilon^{A_2}=0.3$ and $\epsilon^{A_i}=0.1\,(i=3,...,n)$. According to Eq. (\ref{e2}), one have
\begin{small}
\begin{equation*}
	\begin{split}
	\mathcal{C}(\rho_\beta;H)&=2\beta\sqrt{(0.6+0.1n)^2+\gamma^2},\\
	\mathcal{C}(\tau_\beta;H)&=L_D=(\sqrt{(0.6+0.1n)^2+\gamma^2}+\sqrt{(0.4+0.1n)^2+\gamma^2})\beta,
	\end{split}	
\end{equation*}
\end{small}
and $\mathcal{C}(\rho_\beta^{A_1};H_{A_1})=\mathcal{C}(\rho_\beta^{A_2};H_{A_2})=...=\mathcal{C}(\rho_\beta^{A_n};H_{A_n})=0$.
Therefore, we have
\begin{small}
\begin{equation*}
	\begin{split}
		\bigtriangleup\mathcal{C}(\rho_\beta)&=\mathcal{C}(\rho_\beta;H)-\sum_{i=1}^{n}\mathcal{C}(\rho_\beta^{A_i};H_{A_i})\\
		&=2\beta\sqrt{(0.6+0.1n)^2+\gamma^2},\\
		RBC_{ic}&=\mathcal{C}(\tau_\beta;H)-\sum_{i=1}^{n}\mathcal{C}(\rho_\beta^{A_i};H_{A_i})\\
		&=(\sqrt{(0.6+0.1n)^2+\gamma^2}+\sqrt{(0.4+0.1n)^2+\gamma^2})\beta,\\
		RBC_c&=\mathcal{C}(\rho_\beta;H)-\mathcal{C}(\tau_\beta;H)\\
		&=(\sqrt{(0.6+0.1n)^2+\gamma^2}-\sqrt{(0.4+0.1n)^2+\gamma^2})\beta.
	\end{split}	
\end{equation*}
\end{small}
From the expressions for $RBC_{ic}$ and $RBC_c$, it can be seen that as $\beta$ approaches 1, both $RBC_{ic}$ and $RBC_c$ are increasing, which is not difficult to explain: the coherence of $\rho_\beta$ has been improving during this process, leading to an increase in $RBC_c$. And the increase of $\beta$ also improves the two largest eigenvalues of $\tau_\beta$, so $RBC_{ic}$ is also increasing. In particular, when $n=3$ and $\gamma=0.5$, the trend of $RBC$ and $RBC_{ic}$ is shown in Figure 4.
\begin{figure}[htbp]
	\centering
	\includegraphics[width=0.5\textwidth]{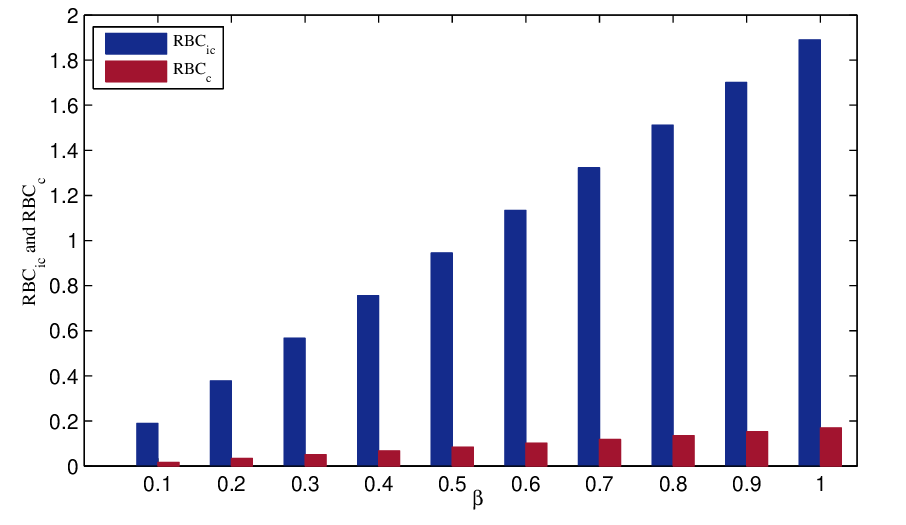}
	\vspace{-1em} \caption{When $\gamma=0.5$, the trend of $RBC_{ic}$ and $RBC_c$ for different values of $\beta$ as shown in figure.} \label{Fig.4}
\end{figure}

Now we can use unitary matrix $U_{2,2^n}$ (the matrix obtained by exchanging the second row and the $2^n$th row of the identity matrix) to achieve the battery capacity gain. Then the reduced states of $\Tilde{\rho}_\beta$ are
$$
\Tilde{\rho}_\beta^{A_i}=\frac{1}{2}\left(\begin{array}{cc}
	1+\beta & 0 \\
	0 & 1-\beta
\end{array}\right)
$$
for $i=1,...,n-1$, and
$$
\Tilde{\rho}_\beta^{A_n}=\frac{1}{2}\left(\begin{array}{cc}
	1 & \beta \\
	\beta & 1
\end{array}\right).
$$
Therefore, we have
\begin{small}
\begin{equation*}
	\begin{split}
		\sum_{i=1}^{n}\mathcal{C}(\Tilde{\rho}_\beta^{A_i};H_{A_i})&=\sum_{i=1}^{n-1}\mathcal{C}(\Tilde{\rho}_\beta^{A_i};H_{A_i})+\mathcal{C}(\Tilde{\rho}_\beta^{A_n};H_{A_n})\\
		&=(1.2+0.2n)\beta.
	\end{split}
\end{equation*} 
\end{small}
The above equation means that our unitary operation transforms part of residual battery capacity into subsystems. We use capacity gain to represent the difference between $\sum_{i=1}^{n}\mathcal{C}(\rho_\beta^{A_i};H_{A_i})$ and $\sum_{i=1}^{n}\mathcal{C}(\Tilde{\rho}_\beta^{A_i};H_{A_i})$. In order to observe the capacity transfer efficiency, we calculate the ratio of capacity gain to the residual battery capacity of $\rho_\beta$ for some values of the interaction parameter $\gamma$,
\begin{small}
\begin{equation*}
	\begin{split}
		\text{ratio}&=\frac{\sum_{i=1}^{n}\mathcal{C}(\Tilde{\rho}_\beta^{A_i};H_{A_i})-\sum_{i=1}^{n}\mathcal{C}(\rho_\beta^{A_i};H_{A_i})}{\bigtriangleup\mathcal{C(\rho_\beta)}}\\
		&=\frac{1.2+0.2n}{2\sqrt{(0.6+0.1n)^2+\gamma^2}}.
	\end{split}
\end{equation*}
\end{small}
When $\gamma=0$, the capacity transfer efficiency is $1$, that is to say, we use unitary evolution to transfer all the residual battery capacity to the subsystems. However, as the interaction parameter $\gamma$ gradually increase, the capacity transfer efficiency will decrease. This is intuitive because the interaction between subsystems will dissipate part of the battery capacity.

Furthermore, the purpose of using unitary evolution is to increase the proportion of subsystems capacity by compressing the proportion of $RBC$. We calculate the proportion of final state's $RBC$ in the entire battery capacity $\mathcal{C}(\Tilde{\rho}_\beta;H)$ for some values of parameters $\gamma$ and $n$, as shown in Figure 5.
\begin{figure}[htbp]
	\centering
	\includegraphics[width=0.45\textwidth]{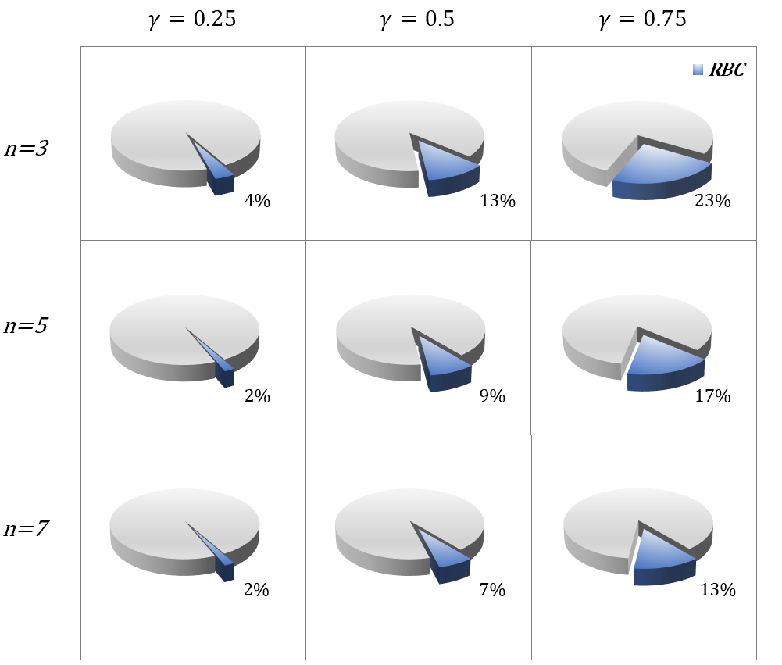}
	\vspace{-1em} \caption{The proportion of $RBC$ in the entire battery capacity for some values of $\gamma$ and $n$ as shown in figure.} \label{Fig.5}
\end{figure}

\section{IV. Conclusions and discussions}
We have investigated the monogamy relationships in quantum battery capacity for the first time. First, we proved the monogamy relation of battery capacity for any two-qubit $X$ states and provided the lower bound of capacity for any battery state. We also defined the concept of residual battery capacity ($RBC$), showing that it can be divided into coherent and incoherent parts. In addition, we demonstrated that global unitary evolution can be used to increase the battery capacity of subsystems at the cost of compressing the residual battery capacity. Furthermore, we observed that the distributive relation for battery capacity can be extended to general $n$ qubits $X$-states and any $n$ qubits $X$-state's battery capacity distribution can be optimized to achieve capacity gain through an appropriate global unitary evolution. Specifically, for three qubits $X$ states, we proposed stronger monogamy relations for battery capacity and provided counterexamples for a set of intuitively correct monogamy relationships.

In this paper, we mainly consider the $n$ qubits $X$-state. future work could explore more general quantum states, raising the interesting question of whether a monogamy relation exists for general quantum battery states or if some other, possibly stronger or weaker, relations can be formulated.

\bigskip
{\bf Acknowledgments:} ~This work is supported by the National Natural Science Foundation of China (NSFC) under Grant Nos. 12204137, 12075159, and 12171044; the specific research fund of the Innovation Platform for Academicians of Hainan Province.

\end{document}